\numberwithin{equation}{section}
\newtheorem{theorem}{Theorem}
\newtheorem{lemma}[theorem]{Lemma}
\newtheorem{conjecture}{Conjecture}
\theoremstyle{definition}
\newtheorem{definition}[theorem]{Definition}
\crefname{claim}{claim}{claims}
\crefname{observation}{observation}{observations}
\renewcommand{\O}{\ensuremath{\mathcal{O}}\xspace}
\newcommand{\R}{\ensuremath{\mathbb{R}}\xspace}
\renewcommand{\L}{\ensuremath{\mathcal{L}}\xspace}
\newcommand{\A}{\ensuremath{\mathcal{A}}\xspace}
\newcommand{\rotation}{\includegraphics[scale = 0.15]{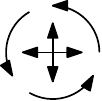}}
\newcommand{\translation}{\includegraphics[scale = 0.15, page =2]{figures/rotation.pdf}}
\newcommand{\NP}{\ensuremath{\textsf{NP}}\xspace}
\newcommand{\PSPACE}{\ensuremath{\textsf{PSPACE}}\xspace}
\newcommand{\ER}{\texorpdfstring{\ensuremath{\exists\R}}{ER}\xspace}
\newcommand{\var}[1]{\left\llbracket#1\right\rrbracket}
\newcommand{\Connector}{connector\xspace}
\newcommand{\unit}{\textsc{unit}\xspace}
\newcommand{\intersection}{\textsc{intersection}\xspace}
\newcommand{\stretchability}{\textsc{Stretchability}\xspace}
\newcommand{\UnitRec}{\textsc{Unit Segment Recognition}\xspace}
\newcommand{\PolyRec}{\textsc{$k$-PolyLine Recognition}\xspace}
\newcommand{\Rec}{\textsc{Recognition}\xspace}
\newcommand{\polyline}{polyline\xspace}
\newcommand{\polylines}{polylines\xspace}
\newcommand{\Polylines}{Polylines\xspace}
\DeclareMathOperator{\trace}{trace}
\title{Recognition of Unit Segment and Polyline Graphs is \ER-Complete}
\author[1]{Michael Hoffmann}
\author[2]{Tillmann Miltzow}
\author[1]{Simon Weber}
\author[3]{Lasse Wulf}
\affil[1]{Department of Computer Science, ETH Z\"{u}rich, Switzerland}
\affil[2]{Department of Information and Computing Sciences, Utrecht University, The Netherlands}
\affil[3]{Institute of Discrete Mathematics, Graz University of Technology, Austria}
\date{}
\begin{document}

\maketitle

\begin{abstract}
      Given a set of objects $O$ in the plane, the corresponding intersection graph is defined as follows.
    Each object defines a vertex and an edge joins two vertices whenever the corresponding objects intersect.
    We study here the case of unit segments and \polylines with exactly $k$ bends.
    In the recognition problem, we are given a graph and want to decide whether the graph can be represented as an intersection graph of certain geometric objects.
    In previous work it was shown that various recognition problems are \ER-complete, leaving unit segments and \polylines among the few remaining natural cases where the recognition complexity remained open. 
    We show that recognition for both families of objects is \ER-complete.
\end{abstract}

\vfill

\begin{center}
    \includegraphics{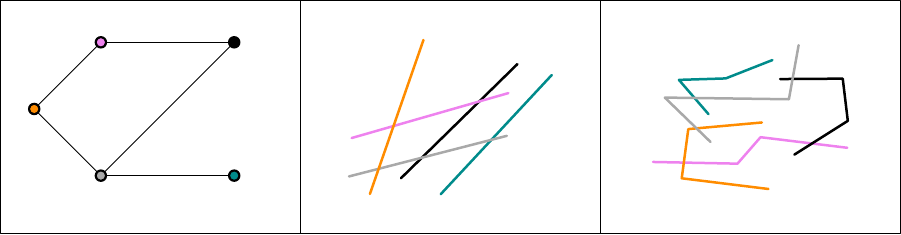}

    \bigskip
    
    A graph and two representations as an intersection graph, one using unit segments and one using \polylines.
\end{center}

\vfill

\paragraph{Acknowledgements.}

This research started at the \textit{19th Gremo Workshop on Open Problems} in Binn VS, Switzerland, in June 2022.
We thank the organizers for the invitation and for providing a very pleasant and inspiring working atmosphere.
Tillmann Miltzow is generously supported by the Netherlands Organisation for Scientific Research (NWO) under project  no. VI.Vidi.213.150. Lasse Wulf is generously supported by the Austrian Science Fund (FWF): W1230.

\newpage
\section{Introduction}
Many real-life problems can be mathematically described in the language of graphs.
For instance, consider all the cell towers in Switzerland.
We want to assign each tower a frequency such that no two towers that overlap in coverage use the same frequency.
This can be seen as a graph coloring problem:
Every cell tower becomes a vertex, overlap indicates an edge and a frequency assignment corresponds to a proper coloring of the vertices;
see \Cref{fig:switzerland}.

\begin{figure}[htb]
    \centering
    \includegraphics[scale=0.6]{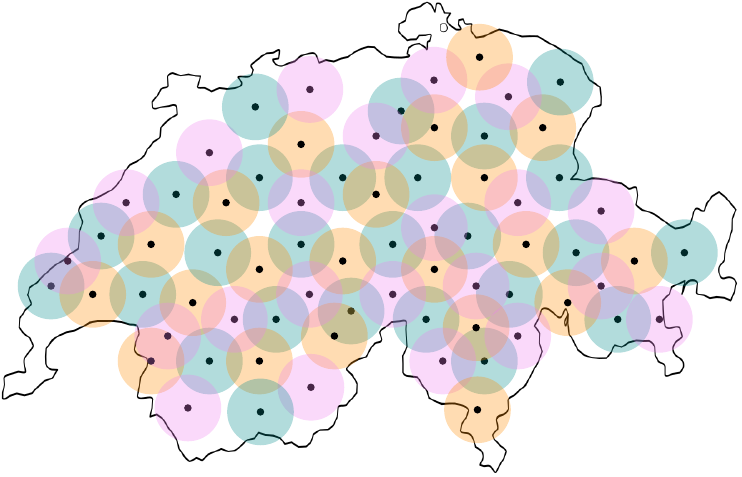}
    \caption{A fictional illustration of mobile coverage of Switzerland using cell towers.}
    \label{fig:switzerland}
\end{figure}

In many contexts, we have additional structure on the graph that may or may not help us to solve the underlying algorithmic problem.
For instance, it might be that the graph arises as an intersection graph of unit disks in the plane such as in the example above.
In that case, the coloring problem can be solved more efficiently~\cite{ColorDisks}, and there are better approximation algorithms for the clique problem~\cite{Clique-Disk}.
This motivates a systematic study of geometric intersection graphs.
Another motivation is mathematical curiosity. 
Simple geometric shapes are easily visualized and arguably very natural mathematical objects.
Studying the structural properties of intersection graphs gives insight into those geometric shapes and their possible intersection patterns.

It is known for a host of geometric shapes that it is \ER-complete to recognize their intersection graphs~\cite{Cardinal2018_Intersection,kratochvil1994intersection,matousek2014intersection,mcdiarmid2013integer}.
The class \ER consists of all of those problems that are polynomial-time equivalent to deciding whether a given system of polynomial inequalities and equations with integer coefficients has a real solution. We will introduce \ER in more detail below.

In this work, we focus on two geometric objects; unit segments and \polylines with exactly $k$ bends.
Although we consider both types of geometric objects natural and well studied, 
the complexity of their recognition problem was left open.
We close this gap by showing that both recognition problems are \ER-complete.


\subsection{Definition and Results}

We define geometric intersection graphs and the corresponding recognition problem.

\paragraph{Intersection graphs.}
Given a finite set of geometric objects $O$, we denote by $G(O)= (V,E)$, the corresponding \textit{intersection graph}.
The set of vertices is the set of objects ($V = O$) and  two objects $u$ and $v$ are adjacent ($uv\in E$) if 
they intersect ($u\cap v \neq \emptyset$). 
We are interested in intersection graphs that come 
from different families of geometric objects.

\paragraph{Families of geometric objects.} 
Examples for a family of geometric objects are  segments, disks, unit disks, unit segments, rays, and convex sets, to name a few of the most common ones.
In general, 
 given a geometric body $O \subseteq \R^2$ we denote by $O^{\translation}$  the family of all translates of $O$.
Similarly, we denote by $O^{\rotation }$  the family of all translates and rotations of $O$.
For example, the family of all unit segments can be denoted as $u^{\rotation}$, where $u$ is a unit segment.

\paragraph{Graph classes.} 
Classes of geometric objects $\O$ naturally give rise to classes of graphs $C(\O)$:
Given a family of geometric objects \O, we denote by $C(\O)$ the class of graphs that can be formed by taking the intersection graph of a finite subset from $\O$.

\paragraph{Recognition.}
If we are given a graph, we can ask if this graph belongs to a geometric graph class. 
Formally, let $C$ be a fixed graph class, then the recognition problem for $C$ is defined as follows.
As input, we receive a graph $G$ and we have to decide whether $G\in C$.
We denote the corresponding algorithmic problem by $\Rec(C)$.
For example, the problem of recognizing unit segment graphs is denoted by $\Rec(C(u^{\rotation}))$.
We will use the term \UnitRec for this problem.
Furthermore, we define \PolyRec as the recognition problem of intersection graphs of \polylines with $k$ bends.

\paragraph{Realizations.}
We can also say that $\Rec(C(\O))$ asks about the existence of a 
\textit{representation} of a given graph. 
A representation or \textit{realization} of a graph $G$ using a family of objects $\O$ 
is a function $r:V\mapsto \O$ such that $r(v)\cap r(w)\not=\emptyset \Longleftrightarrow vw\in E$.
For simplicity, for a set $V'\subseteq V$, we define $r(V')=\bigcup_{v\in V'} r(v)$.

\paragraph{Results.}
We show \ER-completeness of the recognition problems of two very natural geometric graph classes.

\begin{theorem}
    \label{thm:unit}
    \UnitRec is \ER-complete.
\end{theorem}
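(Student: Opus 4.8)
The plan is to prove the two directions of \ER-completeness separately; membership is routine and essentially all of the work is in hardness.

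\emph{Membership in \ER.} A candidate realization of the input graph $G=(V,E)$ is described by $4|V|$ real coordinates, namely the two endpoints $p_v,q_v\in\R^2$ of the segment assigned to each vertex $v$. I would write the polynomial system asserting (i) $\|p_v-q_v\|^2=1$ for every $v$ (unit length); (ii) for every edge $uv\in E$, that the closed segments $\overline{p_uq_u}$ and $\overline{p_vq_v}$ meet --- an existential polynomial condition, e.g.\ $\exists\lambda,\mu\in[0,1]$ with $\lambda p_u+(1-\lambda)q_u=\mu p_v+(1-\mu)q_v$, or a Boolean combination of sign conditions on the orientations of the relevant point triples; and (iii) for every non-edge, the negation of (ii), i.e.\ disjointness, again a polynomial sign condition. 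The conjunction is an \ETR instance of size polynomial in $G$, so $\UnitRec\in\ER$.

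\emph{\ER-hardness.} For the hard direction I would reduce from a ``low-level'' \ER-complete problem whose only operations are addition and inversion, such as \ETRINV, with variables confined to a bounded interval like $[1/2,2]$ and constraints of the forms $x+y=z$ and $x\cdot y=1$ (\stretchability is an alternative seed). Given such an instance $I$, the reduction builds a graph $G_I$ together with an intended unit-segment drawing, and the goal is to show that $G_I$ is a unit-segment graph if and only if $I$ is satisfiable. I would assemble $G_I$ from gadgets, in this order: first an \emph{anchor/frame} gadget, a cluster of unit segments whose intersection pattern is so over-constrained that, up to a global rigid motion and reflection, it admits essentially one drawing, pinning down a coordinate system and a family of reference segments; then \emph{variable} gadgets, where the value of a variable $x\in[1/2,2]$ is encoded as the position of a distinguished endpoint of a distinguished unit segment along a reference segment; then \emph{wire} gadgets that transport a variable's value (possibly reflected) to a distant region via a chain of mutually incident unit segments pressed against the frame; and finally an \emph{addition} gadget and an \emph{inversion} gadget that realize the two constraint types, each forcing the prescribed algebraic relation among the endpoint positions encoding its inputs and output, and each stiffened by auxiliary segments so that the relation is both necessary and sufficient for a valid local drawing. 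Superimposing one gadget per variable and per constraint of $I$, and arranging that distinct gadgets interact only through their shared frame-anchored endpoints, produces $G_I$. A satisfying assignment of $I$ yields the intended drawing (completeness); conversely, any unit-segment realization of $G_I$, after normalizing the frame by the ambient symmetry, forces every variable endpoint into a position simultaneously consistent with all constraint gadgets, hence yields a satisfying assignment (soundness).

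\emph{Main obstacle.} The difficulty lies entirely in the geometry of the gadgets and concentrates in two spots. First, since every object has the \emph{same} fixed length, the familiar trick of rescaling a subconfiguration to make it rigid is unavailable; rigidity must instead be forced purely through incidence patterns of equal-length segments, which is delicate --- e.g.\ guaranteeing that a transmission chain cannot ``fold'' or drift and that the frame cannot be realized with an unintended combinatorial structure. Second, and hardest, is the \emph{inversion} (equivalently, multiplication) gadget: encoding the bilinear relation $x\cdot y=1$ through the incidences of constant-length segments --- presumably via an intercept-theorem or similar-triangles configuration --- and then proving that no degenerate, flipped, or perturbed realization escapes the intended relation. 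Tied to this is the quantitative error analysis: because $G_I$ stacks linearly many gadgets, the variable, wire, and arithmetic gadgets must be robust, so that an approximate realization of $G_I$ still yields an approximate solution of $I$ --- which, by the bounded-range formulation of \ETRINV, can be snapped to an exact one --- and, in the other direction, so that satisfiability survives the small perturbations forced by realizing all gadgets simultaneously.
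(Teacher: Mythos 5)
Your \ER-membership argument is fine and matches the paper's. The hardness part, however, is a plan rather than a proof, and the plan leaves exactly the hard content unresolved. You reduce from \ETRINV and posit an anchor/frame gadget that is ``rigid up to a global rigid motion,'' wire gadgets, and addition and inversion gadgets built from unit segments --- but none of these are constructed, and you yourself flag their construction as the main obstacle. In the recognition setting the only data available is the intersection pattern of equal-length segments: a finite incidence pattern generically admits a continuum of deformations, so a cluster of unit segments with ``essentially one drawing'' is not something you can simply posit; likewise, encoding a real value $x\in[1/2,2]$ as the position of an endpoint along a reference segment and then enforcing $x+y=z$ or $xy=1$ through incidences alone has no known realization, and nothing in your sketch indicates how the intersection graph could ``read off'' a continuous position with enough precision to force a bilinear relation. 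Without these gadgets the reduction does not exist, so the hardness direction is not established.

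The paper avoids arithmetic gadgets entirely: it reduces from \stretchability (which you mention only as an alternative seed). Given a simple pseudoline arrangement $\A$, one builds the intersection graph of $\A$ enhanced with \emph{probes} (arcs following each pseudoline up to its $k$-th crossing), short \emph{connectors}, and an enclosing induced cycle. Completeness is elementary: a stretched arrangement can be compressed so that all slopes and all crossings lie in a tiny box, after which pseudolines, probes, connectors and a sawtooth cycle are all realizable by unit segments. Soundness never extracts numerical values; instead, structural lemmas about cells of the cycle's arrangement (every component away from the cycle lies in one cell, and the trace of a cell forces the cyclic order of attachments) pin down the order of connectors and probes, and a probe-counting argument shows the important segments must realize the same order type as $\A$, i.e.\ $\A$ is stretchable. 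If you want to salvage your route, you would have to actually build and verify the frame/addition/inversion gadgets, which is precisely the step for which no technique is currently known in the intersection-graph model; the order-type route is the one that works.
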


\begin{theorem}
    \label{thm:polylines}
    \PolyRec is \ER-complete,
    for any fixed $k\geq 0$.
\end{theorem}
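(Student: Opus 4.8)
The plan is to reduce from \UnitRec, using Theorem~\ref{thm:unit} as a black box, so that \ER-hardness of \PolyRec follows for every fixed $k\geq 0$; membership in \ER is the easy direction and I would dispatch it first. For membership, note that a $k$-bend \polyline is determined by its $k+2$ ordered breakpoints, i.e.\ by $2(k+2)$ real coordinates, and the condition ``these two \polylines intersect'' is a finite disjunction of conjunctions of polynomial inequalities in these coordinates (each pair of the $O(k)$ constituent segments either crosses or does not, which is a sign condition on $2\times 2$ determinants and on the endpoints). Hence for an $n$-vertex graph $G$ one writes an \ETR formula with $O(kn)$ variables asserting the existence of $n$ \polylines whose intersection pattern is exactly $E(G)$; this is polynomial in the input for fixed $k$, giving $\PolyRec\in\ER$.

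For hardness, the key idea is that a unit segment is essentially a $0$-bend \polyline of a prescribed length, so I want to force a $k$-bend \polyline to ``behave like'' a unit segment. The cleanest route is a gadget construction: given an instance $G$ of \UnitRec on vertex set $V$, build a graph $G'$ by replacing each vertex $v$ with a small gadget $H_v$ and adding, for every edge $uv\in E(G)$, a connection between a designated ``port'' vertex of $H_u$ and one of $H_v$. The gadget $H_v$ must have the property that in any realization of $G'$ by $k$-bend \polylines, the \polyline representing the port vertex is forced (up to a controlled distortion) to be a straight unit segment, and conversely any unit-segment realization of $G$ extends to a \polyline-realization of $G'$. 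A natural way to pin down a straight, unit-length shape is to surround the port \polyline with a rigid frame of other \polylines (e.g.\ long ``wall'' \polylines meeting it only at its two extreme endpoints, plus ``stopper'' \polylines that prevent the middle bends from deviating), exploiting the fact that a \polyline with $k$ bends has only $k+1$ segments and limited total turning, so a sufficiently constrained neighborhood leaves no room for genuine bending.

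The hard part will be designing this gadget so that the forcing is genuinely rigid under the full flexibility of $k$-bend \polylines: unlike a segment, a \polyline can route around obstacles, double back, or have bends that nearly cancel, so one must argue that every realization of the frame forces the port \polyline into a thin tube and that within that tube the intersection constraints with the frame pin its length to within a factor that can be absorbed by a global rescaling of the target unit-segment instance. I would handle this by first establishing a ``straightening lemma'' — in any valid realization the port \polyline, being forced to intersect two far-apart frame pieces and avoid a dense set of stoppers, must stay within distance $\varepsilon$ of the line through its endpoints — and then an ``isometry lemma'' transferring the combinatorics of $G'$ back to a perturbed unit-disk/unit-segment realizability statement for $G$, using that \UnitRec is robust under small perturbations (this robustness is implicit in its \ER-hardness proof, since \ER-hard problems have semi-algebraic solution sets and one can work with a thickened instance). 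Composing these, a realization of $G'$ yields an approximate unit-segment realization of $G$, which after rescaling is an exact one, completing the reduction. The secondary technical obstacle is that $k$ is fixed in advance, so the gadget size and frame complexity must be bounded by a function of $k$ only (independent of $n$), which I expect to be achievable since each port \polyline needs only finitely many frame pieces to be locally rigid; the construction then glues $n$ such bounded gadgets together.
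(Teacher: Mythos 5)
Your membership argument is fine and matches the paper's. The hardness part, however, has a genuine gap at its core: the vertex gadget $H_v$ that is supposed to force a port \polyline to be (approximately) a straight \emph{unit} segment is never constructed, and in the form you describe it cannot exist. The class of intersection graphs of $k$-bend \polylines is invariant under invertible affine maps of the plane: such a map sends a $k$-bend \polyline to a $k$-bend \polyline (collinearity and non-collinearity of breakpoints are preserved) and does not change the intersection pattern. Hence no gadget specified purely as a graph can pin down a metric quantity such as length: from any realization of $G'$ you can apply a shear or anisotropic scaling and obtain another valid realization in which non-parallel port \polylines have their length ratio changed arbitrarily. So your ``isometry lemma'' cannot hold, and a realization of $G'$ can at best certify that $G$ is a (straight) \emph{segment} graph, not a unit segment graph; since segment graphs strictly contain unit segment graphs, the reduction from \UnitRec is not sound. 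The fallback you propose --- passing through an ``approximate unit-segment realization'' and invoking robustness of \UnitRec under perturbation --- is also unsupported: unit segments are exactly the kind of sturdy objects for which perturbation arguments fail (representations may require irrational or doubly-exponentially large coordinates, as discussed in the paper), and nothing in the \ER-hardness of \UnitRec gives the thickened-instance equivalence you assume. Finally, even the ``straightening lemma'' is problematic on its own terms, because your walls and stoppers are themselves flexible $k$-bend \polylines and can route around the port; local rigidity is not available in this model.

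For comparison, the paper avoids both issues by reducing from \stretchability rather than from \UnitRec, reusing the soundness argument of \Cref{thm:unit} (which never uses unit length) inside a designated cell, the canvas, and obtaining straightness there by a global bend-counting argument instead of local rigidity: each pseudoline is doubled into a twin, and a frame of $2k+2$ chains forces, via the cycle and order lemmas (\Cref{lem:sameorder}), the two twins to cross in each of the $2k$ cells outside the canvas (\Cref{lem:crossingpercell}); by \Cref{lem:consecutive-intersections} this consumes at least $2k-1$ of their combined $2k$ bends, so at least one twin is straight inside the canvas (\Cref{lem:straight}), and the unit-segment soundness argument applies to the straight twins. If you want to salvage your outline, you would need to abandon the attempt to encode unit length and instead adopt some mechanism, like a bend budget spent outside the region of interest, that forces straightness where it matters while starting from a metric-free problem such as \stretchability.
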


\subsection{Discussion}
In this section, we discuss strengths and limitations of our results from different perspectives. To supply the appropriate context, we give a comprehensive list of important geometric graph classes and the current knowledge about the complexity of their recognition problems in \Cref{tab:Intersection}.

\begin{table}[ht]
   \caption{Classes of Geometric Intersection Graphs and their algorithmic complexity.} 
   \label{tab:Intersection}
   \begin{tabular}{|l|c|c|}
   \hline
   \multicolumn{1}{|c|}{Intersection graphs of} & \multicolumn{1}{c|}{Complexity} & \multicolumn{1}{c|}{Source} \\ \hline
   circle chords & polynomial & Spinrad~\cite{S94} \\  
   (unit) interval & polynomial & Booth and Lueker~\cite{BL76} \\
   string & \NP-complete & Kratochv{\'\i}l~\cite{StringHard}, Schaefer et al.\  \cite{schaefer2003recognizing} \\  
   outerstring & \NP-complete & Kratochv{\'\i}l~\cite{K91} (see also Rok and Walczak~\cite{RW19}) \\
   $C^{\translation}$, $C$ convex polygon & \NP-complete & 
   M\"{u}ller et al.~\cite{MvLvL13}, Kratochv{\'\i}l, Matou{\v{s}}ek~\cite{kratochvil1989np}\\
   (unit) disks & \ER-complete & McDiarmid and M{\"u}ller \cite{mcdiarmid2013integer}\\  
   convex sets & \ER-complete & Schaefer~\cite{Schaefer2010} \\  
   (downwards) rays & \ER-complete & Cardinal et al.~\cite{Cardinal2018_Intersection} \\  
   outer segments & \ER-complete & Cardinal et al.~\cite{Cardinal2018_Intersection} \\  
   segments & \ER-complete & Kratochv{\'\i}l and Matou{\v{s}}ek~\cite{kratochvil1994intersection} (see also~\cite{matousek2014intersection,Schaefer2010})\\  
   unit balls & \ER-complete & Kang and M{\"u}ller~\cite{Kang2012_Sphere} \\ 
   segments in 3D & \ER-complete & Evans et al.\ \cite{DBLP:conf/gd/EvansRSS019}\\
   \hline
   \end{tabular}
\end{table}

\begin{figure}[t!]
    \centering
    \includegraphics[scale=1]{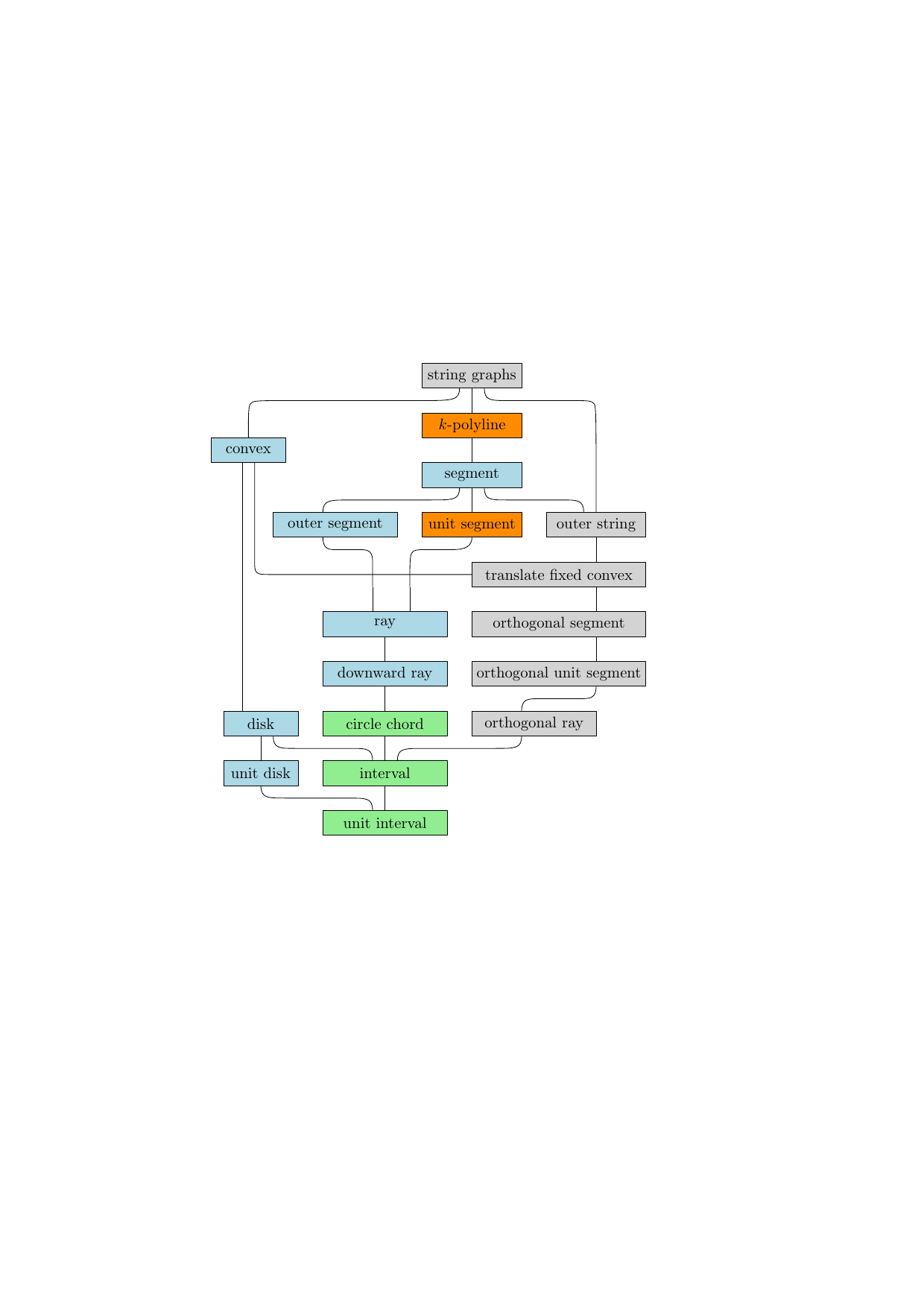}
    \caption{
    Each box represents a different geometric intersection graph class. Upward edges between classes express containment.
    Those marked in green can be recognized in polynomial time. 
    Those in blue are known to be \ER-complete. 
    The ones in gray are \NP-complete, 
    and the orange ones are the new results presented in this paper.
}
    \label{fig:Graph-Classes}
\end{figure}

\paragraph{Refining the hierarchy.} We see our main contribution in refining the hierarchy of geometric graph classes for which recognition complexity is known. 
    Both unit segments as well as \polylines with $k$ bends are natural objects that are well studied in the literature. However, the recognition of the corresponding graph classes has not been studied previously.  \Polylines with an unbounded number of bends are equivalent to strings (it is possible to show that polylines with an unbounded number of bends are as versatile as strings with respect to the types of graphs that they can represent, since the number of intersections of any two strings can always be bounded from above~\cite{schaefer2003recognizing,schaefer2004decidability}), while \polylines with $0$ bends are just segments. \Polylines with $k$ bends thus naturally slot in between strings and segments, and their corresponding graph class is thus also an intermediate class between the class of segment intersection graphs and string graphs, as can be seen in \Cref{fig:Graph-Classes}. By showing that recognition for \polylines with $k$ bends is \ER-complete for all constant $k$, we see that the switch from \ER-completeness (segment intersection graphs) to \NP-membership (string graphs) really only happens once $k$ is infinite. Similarly, unit segment intersection graphs slot in between segment and ray intersection graphs. 
    Intuitively, it is natural to expect that recognition of a class intermediate to two classes that are \ER-hard to recognize is \ER-hard again (although in general this intuition is wrong). Our \Cref{thm:unit} confirms this intuition in the case of $k$-bend polylines with $k \in \{1,2,\dots\}\cup\{\infty\}$.

\paragraph{Large coordinates.}
    One of the consequences of \ER-completeness is that
    there are no known ways to encode a solution (a representation) in polynomially many bits.
    Some representable graphs may only be representable by objects with irrational coordinates, or by rational coordinates with nominator and denominator of size 
    at least $2^{2^{n^c}}$, for some fixed $c > 0$. 
    In other words, the numbers to describe the position might need 
    to be doubly exponentially large~\cite{matousek2014intersection} for some graphs.
    For ``flexible'' objects such as polylines, rational solutions can always be obtained by slightly perturbing the representation. For more ``sturdy'' objects such as unit segments this may not be possible, however it is known that for example unit disks admit rational solutions despite their inflexibility~\cite{mcdiarmid2013integer}.

\paragraph{Unraveling the broader story.}
    Given the picture of \Cref{fig:Graph-Classes}, we aim to get a better understanding of 
    when geometric recognition problems are \ER-complete and when they are contained in \NP.
    \Cref{fig:Graph-Classes} indicates that \ER-hardness comes from objects that are complicated enough to avoid a complete combinatorial characterization. 
    Unit interval graphs, interval graphs and circle chord graphs admit such a characterization that in turn can be used to develop a polynomial-time algorithm for their recognition.
    On the other hand, if the geometric objects are too broad, the recognition problem is in \NP.
    The prime example is string graphs. 
    For string graphs, it is sufficient to know the planar graph given by the intersection pattern of the strings, which is purely combinatorial information that does not care anymore about precise coordinates.
    Despite the fact that there are graphs that need an exponential number of intersections, it is possible to find a polynomial-size witness~\cite{schaefer2003recognizing} and thus we do not have \ER-hardness.
    We want to summarize this as: recognition problems are \ER-complete if the underlying family of geometric objects is at a sweet spot of neither being too simplistic nor too flexible.

    When we consider \Cref{tab:Intersection} we observe 
    two different types of \ER-complete families.
    The first type of family
    encapsulates all rotations $O^{\rotation}$ of a given object $O$ (i.e., segments, rays, unit segments etc.).
    The second type of family contains translates and possibly homothets of geometric objects that have some curvature themselves (i.e., disks and unit disks). 
    However in case we fix a specific object without curvature, i.e., a polygon, and consider all translations of it, then the recognition problem lies in \NP~\cite{MvLvL13}.
    Therefore, broadly speaking, curvature or rotation seem to be properties needed for \ER-completeness and the lack of it seems to imply \NP{}-membership. We wish to capture one part of this intuition in the following conjecture:

    \begin{conjecture}
    \label{con:rotation}
        Let $O$ be a convex body in the plane with at least two distinct points.
        Then $\Rec(O^{\rotation})$ is \ER-complete.
    \end{conjecture}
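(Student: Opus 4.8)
The plan for the easy direction is the standard one, under the (usually harmless) assumption that $O$ is semialgebraic; for a genuinely arbitrary convex body even membership in \ER is unclear, so I would fold semialgebraicity into the hypothesis. A realization assigns to each vertex $v$ a triple $(x_v,y_v,\theta_v)\in\R^3$ encoding the translation and rotation of $O$ used for $v$. For an edge $uv$, the condition $r(u)\cap r(v)\neq\emptyset$ is ``there exists a point lying in both copies'', which is existential; for a non-edge, $r(u)\cap r(v)=\emptyset$ is ``there exists a line separating the two copies'', which is again existential because $r(u),r(v)$ are compact and convex. Conjoining these over all pairs produces a single existentially quantified first-order sentence over $\R$ that holds iff $G\in C(O^{\rotation})$, so $\Rec(O^{\rotation})\in\ER$.

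\textbf{Hardness, the skeleton.} For the hard direction I would reduce from a fixed \ER-complete anchor problem --- \stretchability, or an \ETRINV-style polynomial feasibility instance --- mimicking the blueprint of the segment proofs~\cite{kratochvil1994intersection,matousek2014intersection} and the unit-disk proof~\cite{mcdiarmid2013integer}. The goal is to build, out of copies of $O$, a \emph{rigid scaffold} whose realizations are combinatorially forced up to similarity, plus a bounded number of ``free'' copies whose admissible positions --- constrained by prescribed adjacencies and non-adjacencies to the scaffold --- trace out precisely the solution set of the input instance. Concretely this asks for gadgets that transmit a real signal along a ``wire'', a ``variable'' gadget with a one-parameter realization space, and ``addition'' and ``multiplication'' (or ``inversion'') gadgets, so that arbitrary polynomial equalities can be imposed.

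\textbf{Hardness, the case analysis.} I would organize the gadget constructions by the local geometry of $\partial O$, using a trichotomy by priority. First, if $\partial O$ contains a line segment (this includes every convex polygon, and the degenerate case ``$O$ is a segment'', which is already \Cref{thm:unit} after rescaling, and is in the spirit of \Cref{thm:polylines}): copies of $O$ aligned along that flat side behave like thickened unit segments whose slope is a usable real degree of freedom, so one adapts the segment machinery, designing the gadgets so that the ``blob'' attached to each flat side creates no unwanted intersections. Otherwise, if $\partial O$ has a corner (an extreme point with two distinct supporting lines): a copy pinned near its corner acts like a rigid fat wedge and already supplies strong positional constraints, which one leverages for the scaffold. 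Otherwise $\partial O$ is smooth and strictly convex, as for a disk or an ellipse: a copy then has no effective rotational freedom, but its point of contact with a fixed tangent line is a real parameter, and one adapts the unit-disk construction of McDiarmid and M\"uller~\cite{mcdiarmid2013integer}. Since the three cases are exhaustive for every planar convex body with at least two points, assembling them yields \ER-hardness.

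\textbf{Main obstacle.} The hard part --- and the reason this remains a conjecture --- is \emph{uniformity across all shapes at once}: the gadgets must be synthesized from a single fixed $O$ handed to us only qualitatively, with no control over the ratio of its diameter to its minimum radius of curvature, the opening angles of its corners, the length of a flat side, or whether $O$ has any symmetry, while the known segment and disk arguments lean on very particular rigid configurations (order-type/stretchability encodings, respectively dense near-lattice packings) that do not transfer off the shelf. I expect the realistic route is to prove the conjecture first for ``tame'' families --- $O$ smooth and strictly convex with everywhere-positive curvature, and separately $O$ a polygon --- and then to absorb the awkward interpolating bodies (very long thin ones, bodies that are polygonal except for a single short strictly convex arc, bodies whose curvature degenerates at isolated points) by perturbation or ad hoc tweaks. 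Producing one construction, or a finite menu of constructions provably covering every convex body, is where the genuine difficulty lies.
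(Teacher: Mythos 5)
There is nothing in the paper to compare against: the statement you are proving is \Cref{con:rotation}, which the authors explicitly leave open (it is stated as a conjecture, motivated only by the informal observation that rotation or curvature seems to drive \ER-hardness). So the relevant question is whether your proposal actually closes it, and it does not: what you have written is a research plan whose central component --- the hardness gadgets for an arbitrary convex body $O$ --- is never constructed. The trichotomy (flat side / corner / smooth strictly convex) is a reasonable way to organize an attack, but in each branch the key step is asserted rather than proved. In particular, in the smooth strictly convex case your claim that a copy of $O$ ``has no effective rotational freedom'' is false unless $O$ is a disk: for an ellipse, rotation is a genuine degree of freedom, and the McDiarmid--M\"uller argument~\cite{mcdiarmid2013integer} leans on the rotational invariance of disks, so it does not transfer off the shelf --- this branch is arguably the heart of the open problem, not a case that can be delegated to an existing proof. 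Similarly, ``a corner acts like a rigid fat wedge and supplies strong positional constraints'' and ``copies aligned along a flat side behave like thickened unit segments'' are exactly the statements that would need quantitative, $O$-independent proofs, and your own final paragraph concedes that no uniform construction is known. Note also that even the flat-side case is not simply \Cref{thm:unit} ``after rescaling'': the body attached to the flat side creates additional intersections that the unit-segment reduction does not tolerate without new work.

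One further point on the easy direction: your membership argument silently replaces the conjecture's hypothesis (an arbitrary convex body with at least two points) by ``$O$ semialgebraic''. That is a sensible repair --- without some finite description of $O$ the problem is not even well-posed as a computational problem --- but it is a modification of the statement, and you should say so explicitly rather than treat it as harmless. In sum: your proposal is a plausible roadmap that correctly identifies where the difficulty lies, but it proves nothing beyond what the paper already establishes for segments and polylines; the conjecture remains open, and the gap is precisely the missing uniform (or case-exhaustive) gadget construction.
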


    We wonder if our intuition on the curvature of geometric objects could be generalized as well. It seems plausible that $\Rec(O^{\translation})$ is \ER-complete if and only if $O$ has curvature.

\paragraph{Restricted geometric graph classes.}  
    
A classic way to make an algorithmic problem easier is to consider it only on a restricted input.
In the context of \ER and geometric graph recognition, we would like to mention the work by Schaefer~\cite{schaefer2021complexity}, who showed that 
the recognition problem of segments is already \ER-complete for graphs of bounded degree.
We conjecture that the same is true for unit segments and \polylines. However, this does not follow from our reductions since we create graphs of unbounded degree.

\begin{conjecture}
    \label{con:BoundedDegree}
        \UnitRec is \ER-complete already for graphs of bounded degree.
\end{conjecture}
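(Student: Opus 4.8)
The plan is to revisit the \ER-hardness reduction behind \Cref{thm:unit} and re-engineer it so that every vertex of the produced graph has degree bounded by an absolute constant. The natural template is Schaefer's argument that segment recognition is already \ER-complete on bounded-degree graphs~\cite{schaefer2021complexity}: one would like to ``unit-ify'' that construction, or, equivalently, to rebuild our own gadgets with degree as an explicit design constraint. A first step is to locate the sources of unbounded degree in the current reduction. In reductions of this flavour these are typically of two kinds: (i) a rigid global scaffold or frame to which many segments must attach in order to pin down the shape of the representation, and (ii) ``bundle'' or ``clique'' gadgets that force a large set of segments to pairwise intersect or to lie in a common small region. Both inflate degrees linearly in the size of the instance, so both must be redesigned.

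For (i) the idea is to replace a single global reference object by a grid-like lattice of unit segments in which each segment meets only its $O(1)$ lattice neighbours, and to argue rigidity of the whole lattice inductively, propagating position constraints from cell to neighbouring cell, rather than relying on one high-degree anchor. For (ii) one would ``fan out'' each high-degree junction into a balanced binary tree of constant-degree junctions; since we work with unit segments, the subtlety is that such a fan-out must be realised with segments of length exactly $1$ and without creating unintended crossings, which constrains the geometry of each little tree but should be manageable locally. One then has to re-check that the value-carrying wires and the arithmetic gadgets (addition and multiplication, or inversion as in the \ETRINV framework) still go through with unit segments of bounded degree: each wire carrying a real value typically needs a small local frame of a couple of crossing segments, and threading many such wires through the bounded-degree lattice without spurious intersections is the delicate bookkeeping part.

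The main obstacle is to satisfy \emph{simultaneously} three constraints that each individually forbid a standard trick. Unit length rules out ``stretch one long reference segment and let many short segments cross it''. Bounded degree rules out ``let one large clique pin everything down''. And polynomiality rules out ``discretise a wire into a chain of unit segments'', because the coordinates of a realisation may be doubly exponentially large~\cite{matousek2014intersection}, so any chain whose length scales with coordinate magnitude would blow the instance up super-polynomially. I expect the hardest point to be the scaffold itself: producing a polynomial-size, bounded-degree arrangement of unit segments that is nonetheless rigid enough to encode an arbitrary \ETR instance, since essentially all known rigidity gadgets lean on at least one high-degree anchor, and removing that anchor while keeping unit length and polynomial size is precisely the tension the conjecture is about.
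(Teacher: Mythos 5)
You have not proved anything here, and indeed the paper does not either: \Cref{con:BoundedDegree} is stated as an open conjecture, and the authors explicitly note that it does \emph{not} follow from their reduction, precisely because the construction behind \Cref{thm:unit} (pseudolines intersected by $\Theta(n)$ probes and by all other pseudolines) produces vertices of unbounded degree. Your text is a research plan rather than a proof: it correctly identifies the two sources of unbounded degree (a global rigid scaffold and high-degree junction/bundle gadgets) and the three-way tension between unit length, bounded degree, and polynomial size, but every step that would constitute the actual argument is left open. You never exhibit a bounded-degree, polynomial-size arrangement of unit segments whose rigidity you can certify (the inductive ``cell-to-cell propagation'' is asserted, not argued, and rigidity of unit-segment lattices is exactly the kind of claim that tends to fail or to require a high-degree anchor); you never show that replacing a high-degree junction by a balanced tree of constant-degree junctions preserves \emph{soundness}, i.e.\ that an adversarial realization of the modified graph cannot exploit the extra combinatorial freedom the tree introduces; and you give no completeness/soundness argument tying the new graph back to \ETRINV, \stretchability, or any other \ER-hard problem. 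You yourself concede the central obstacle (``the hardest point to be the scaffold itself'') without resolving it, which is an admission that the gap is not a technicality but the entire content of the conjecture.

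A secondary mismatch worth noting: you frame the attempt in the \ETRINV/arithmetic-gadget style (wires carrying real values, addition and multiplication gadgets), whereas the paper's hardness proof for \Cref{thm:unit} reduces from \stretchability using an order-enforcing cycle, connectors, and probes. Neither framework is wrong a priori, but if you pursue this, be aware that the probe-based route concentrates degree on the pseudoline vertices (each pseudoline meets all its probes and all other pseudolines), so a bounded-degree version would have to disperse the probes among chains of auxiliary objects while still enforcing the order lemmas of \Cref{sec:Cycle} --- and it is exactly this dispersal whose soundness you would need to prove.
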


Once this would be established it would be interesting to find the exact degree threshold when the problem switches from being difficult to being easy.

\subsection{Existential Theory of the Reals.}
The class of the existential theory of the reals \ER (pronounced as `ER') is a complexity class defined through its canonical problem ETR, which also stands for Existential Theory of the Reals. 
In this problem we are given a  sentence of the form \mbox{$\exists x_1, \ldots, x_n \in \R : \Phi(x_1, \ldots, x_n)$},
where~$\Phi$ is a well-formed quantifier-free formula consisting of the symbols $\{0, 1, x_1, \ldots, x_n, +, \cdot, \geq, >, \wedge, \vee, \neg\}$, and the goal is to check whether this sentence is true.

The class \ER is the family of all problems that admit a polynomial-time many-one reduction to ETR.
It is known that $\NP \subseteq \ER \subseteq \PSPACE$~\cite{canny1988some}. 
Note that there are oracle separations that indicate that $\ER$ and $\PSPACE$ are distinct~\cite{OracleSeparation}.
The reason that \ER is an important complexity class is that a number of common problems, mainly in computational geometry, have been shown to be complete for this class.
The compendium by Schaefer, Cardinal and Miltzow~\cite{ERcompendium} give a comprehensive overview.
Schaefer established the current name and pointed out first that
several known \NP-hardness reductions actually imply \ER-completeness~\cite{Schaefer2010}.
Early examples are related to recognition of geometric structures:
points in the plane~\cite{mnev1988universality,Shor1991_Stretchability},
geometric linkages~\cite{abel,schaefer2013realizability},
segment graphs~\cite{kratochvil1994intersection,matousek2014intersection},
unit disk graphs~\cite{Kang2012_Sphere,mcdiarmid2013integer},
ray intersection graphs~\cite{Cardinal2018_Intersection}, and
point visibility graphs~\cite{Cardinal2018_Intersection}.
In general, the complexity class is more established in the graph drawing community~\cite{Dobbins2018_AreaUniversality,erickson2019optimal,AnnaPreparation,schaefer2021complexity}.
Yet, it is also relevant for studying polytopes~\cite{NestedPolytopesER,richter1995realization}, Nash equilibria~\cite{berthelsen2019computational,bilo2016catalog,bilo2017existential,garg2015etr,Schaefer-ETR}, and matrix factorization problems~\cite{chistikov_et_al:LIPIcs:2016:6238,TensorRank,shitov2016universality,Shitov16a}.
Other \ER-complete problems are the Art Gallery problem~\cite{ARTETR,Stade22}, covering polygons with convex polygons~\cite{abrahamsen2021covering}, geometric packing~\cite{etrPacking} and 
training neural networks~\cite{abrahamsen2021training,TrainFullyNeurons}.

\subsection{Proof Techniques}
\label{sec:prelim}

\begin{figure}[tb]
    \centering
    \includegraphics[]{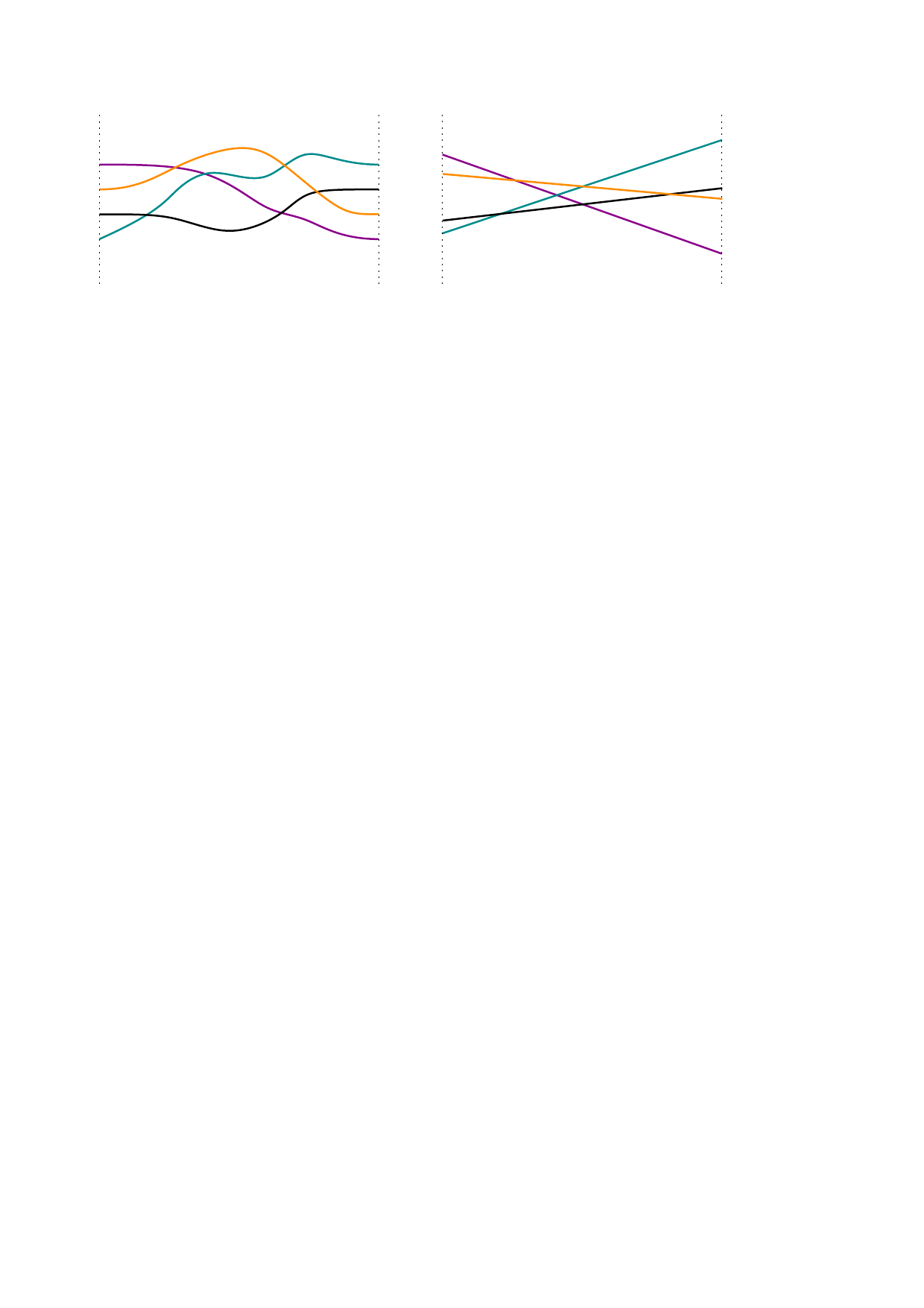}
    \caption{The pseudoline arrangement on the left is combinatorially equivalent to the (truncated) line arrangement on the right; hence, it is stretchable.}
    \label{fig:pseudolines}
\end{figure}

    The techniques used in this paper are similar
    to previous work. \ER-membership can be established straightforwardly by constructing concrete formulae or invoking a characterization of \ER using \emph{real} verification algorithms~\cite{SmoothingGap}, similar to the characterization of \NP. This proof can be found in \Cref{sec:membership}.
    
    For \ER-hardness, we are in essence reducing from the \stretchability problem. In this problem, we are given a simple pseudoline arrangement as an input, and the question is whether this arrangement is stretchable. 
    A pseudoline arrangement is a set of curves that are $x$-monotone.
    Furthermore, any two curves intersect exactly once and
    no three curves meet in a single point.
    We assume that there exist two vertical lines on which each curve starts and ends. The problem is to determine whether there exists a combinatorially equivalent (truncated) line arrangement. See \Cref{fig:pseudolines} for an example.

    For the reduction establishing \ER-hardness of \UnitRec, we are given a pseudoline arrangement \A, and we construct a graph that is representable by unit segments (or \polylines with $k$ bends) if and only if \A is stretchable. This graph is created by enhancing \A with more curves and taking their intersection graph. Due to the way these additional curves are constructed, they can always be drawn as unit segments in any line arrangement $\A'$ combinatorially equivalent to \A, if $\A'$ is first ``squeezed'' into a canonical form.

    For the other direction of the equivalence we need to show that if the graph is representable by unit segments, then \A is stretchable. To aid with this proof, we establish helper lemmas for both unit segments and \polylines stating that cycles in a graph enforce a certain order of intersections of objects around a closed curve in any realization of that graph. The details of these helper lemmas can be found in \Cref{sec:Cycle}.

    Using these additional lemmas, we then show that the unit segments corresponding to the vertices originally representing our pseudolines must have the same combinatorial structure as our original arrangement \A. To ensure this, we add so-called \emph{probes} during the construction of our graph $G$. The ideas of order-enforcing cycles and probes have already been used in a different context~\cite{Cardinal2018_Intersection}.

    The main idea to show hardness of \PolyRec is to enhance the construction for \UnitRec by an additional region where the \polylines representing our pseudolines have to make their $k$ bends. This then ensures that the polylines have no bends in the crucial part representing the pseudoline arrangement \A, and the same arguments as for unit segments can be used to show that this witnesses the stretchability of the original pseudoline arrangement.


\section{Cycle Representations}
\label{sec:Cycle}
In our reduction, we will construct a graph $G$  that contains a cycle.
The cycle helps us to enforce a certain structure on any geometric representation of $G$.
The arguments in this section merely use that our geometric objects are Jordan arcs who are piecewise differentiable, and are not specific to either unit segments or \polylines.

We first introduce some notion about realizations of induced cycles by unit segments, and then explain how to generalize these notions to all Jordan arcs.
Let $G$ be a graph, and let $C\subseteq V(G)$ be a set of vertices such that the induced graph $G[C]$ is a cycle $(c_1,\ldots,c_n)$ of $n\geq 4$ vertices. Consider now a given geometric representation of $G$ as an intersection graph of unit segments. We observe that the set $r(C)$ splits the plane into exactly two regions, since it is a set of unit segments whose intersection pattern is specified by the cycle $C$.
We can then define two Jordan curves ``tracing'' the representation $r(C)$ sufficiently close, one on  the inside and one on the outside of $r(C)$.
By sufficiently close we mean that the curves are close enough to $r(C)$ such that (i) between the two curves no other object $o\not\in r(C)$ starts or ends, (ii) no proper crossing of two such other objects occurs, and (iii) every crossing of another object $o$ with one of the curves implies that $o$ also intersects $r(C)$ within a small $\epsilon$-ball centered at the intersection with the curve.
We call the two curves \textit{interior boundary curve} and \textit{exterior boundary curve}.
See \Cref{fig:Boundary-Curves} (left column) for an illustration. Given either the interior or exterior boundary curve, we can record the \emph{order of traced elements}, i.e., the order of elements $c\in C$ for which the boundary curve is close to $r(c)$. We call this order the \emph{trace} of the curve.

\begin{figure}[tb]
    \centering
    \includegraphics[page = 2]{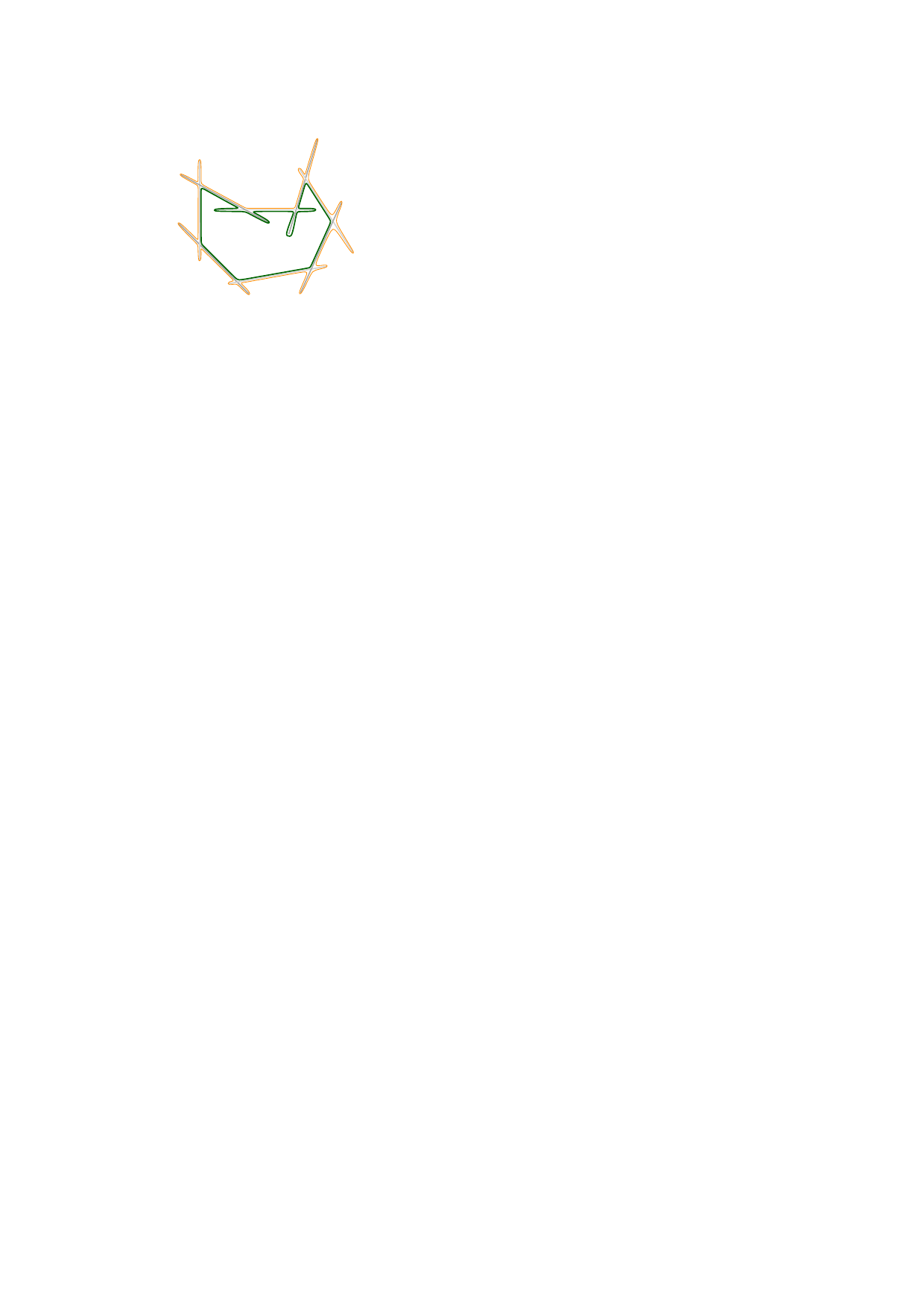}
    \caption{
    Top row: Representations $r(C)$ by unit segments, $1$-polylines and strings, respectively.
    Bottom row: The representation of the top row with two boundary curves added; one on the interior and one on the exterior of the representation.
    Left: The interior curve traces the Jordan arcs $abcdcdefg$. The exterior curve traces $ababcbcdedefefgfgag$.
    Middle:
    The interior curve traces the Jordan arcs $abcdcdef$. The exterior curve traces $ababcdcdefefaf$.
    Right: The interior curve traces the Jordan arcs $abcdcdcd$. The exterior curve traces $ababcbcdad$.
    }
    \label{fig:Boundary-Curves}
\end{figure}

In the more general case that the geometric realization of $G$ is given in terms of Jordan curves instead of only unit segments, we can sometimes extend the same notion of interior and exterior bounding curve and their trace, as showcased in \Cref{fig:Boundary-Curves} (middle and right column).
However, sometimes the situation is more complicated, as showcased in the following example: 
Consider \Cref{fig:counterexample-inner-outer-boundary}, which depicts a cycle $C$ on 6 vertices, and a possible realization $r(C)$ with Jordan arcs. Two cells $F,F'$ of the arrangement are highlighted. 
Now, note that the outer cell of the arrangement only touches the Jordan curves corresponding to vertices 1,5,6. 
This might be considered counter-intuitive, since one could maybe expect the outer boundary curve to touch all Jordan curves for vertices 1  \dots 6.
Furthermore, it is not clear what should be the interior bounding curve, since there are two viable candidates, either inside cell $F$ or cell $F'$. 
Since it is not clear what should be the interior or exterior boundary curve in this case, let us simply make the following definition: For any cell $F$ in the arrangement, 
we define the \emph{boundary curve of $F$} to be the curve that closely follows the boundary of $F$ in the arrangement $r(C)$, such that the same properties (i) -- (iii) from above are met. 
Note that since we assumed our Jordan curves to be piecewise differentiable, it is possible to follow the boundary of $F$ close enough such that properties (i) -- (iii) are met.
Analogously to above, we define the \emph{trace of $F$}, denoted by $\trace(F)$ as the order of encountered elements that are close to the boundary curve of $F$.

The main insight we offer in this section is that even in the more general case, the trace of $F$ has a very restricted structure. 
We show in the following lemma that if the cycle $C$ has $n$ vertices, then for any arbitrary cell $F$ of size at least 5 it is always possible to partition the sequence $\trace(F)$ into $n$ pieces, such that the $i$-th piece is either empty or only contains the symbols $i, i+1$ (mod n).
In particular, this implies that the order of elements along the trace of $F$ in any geometric realization is fixed, up to possible displacement of either +0/+1 (and up to possible cyclic shift or reflection). Formally, we have the following statement. For notational convenience, all cycle symbols in the lemma are to be understood modulo $n$ and start from 1, that is $n+1 = 1$ and $1 - 1 = n$.

\begin{figure}
    \centering
    \includegraphics[width=0.5\linewidth]{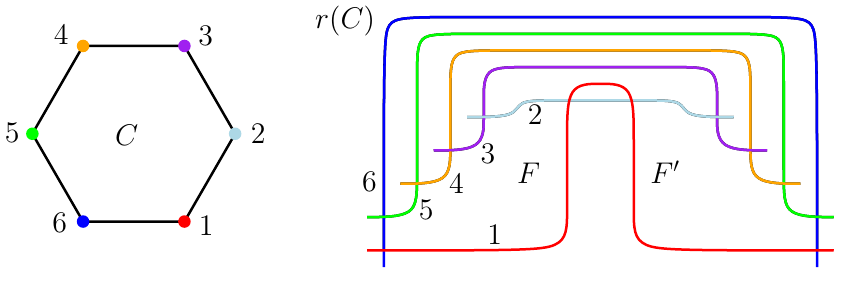}
    \caption{A geometric realization of the cycle on 6 vertices using Jordan arcs.}
    \label{fig:counterexample-inner-outer-boundary}
\end{figure}

\begin{lemma}
\label{lem:structureboundarycurve}
    Let $C \subseteq V(G)$ be an induced cycle of length $n \geq 4$ in the graph $G$, and let $r$ be a geometric realization of $G$ with Jordan curves. Let $F$ be any cell in the arrangement $r(C)$, and let $s := \trace(F)$ be its trace. If the sequence $s$ contains at least 5 different symbols, then there is a cyclic shift and possible reflection $\tilde s$ of $s$, such that we can rewrite $\tilde s = s_1s_2\dots s_n$ such that each $s_i \in \{i,i+1\}^\star$ is nonempty and contains only symbols $i,i+1$.
\end{lemma}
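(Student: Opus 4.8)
The plan is to first isolate the purely combinatorial content of $\trace(F)$ and then close the argument with a connectivity / Jordan-curve argument that uses the whole cycle $r(C)$, not just a few of its arcs.

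\textbf{Step 1: reduce to a closed walk in the cycle graph.} I would begin by observing that, as the boundary curve of $F$ is traversed once, the arc of $r(C)$ that it is currently tracing can only change at a point where two arcs of $r(C)$ meet (rounding a free arc-endpoint that lies in $\overline F$ does not change the traced arc). Since $G[C]$ is an \emph{induced} cycle, $r(c_i)$ meets only $r(c_{i-1})$ and $r(c_{i+1})$, and no point lies on three of the arcs; hence every such change replaces the current symbol $i$ by $i-1$ or $i+1$ modulo $n$. Collapsing maximal runs of equal symbols, $\trace(F)$ becomes a closed walk $W$ in the cycle graph $C_n$ on vertex set $\mathbb{Z}_n$, and the symbol set $S$ of $\trace(F)$ is exactly the vertex set of $W$, which is therefore connected in $C_n$ and so is either an arc of $C_n$ or all of $\mathbb{Z}_n$. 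This collapsing is harmless, since a block decomposition of the collapsed word pulls back verbatim to the original $s$.

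\textbf{Step 2: reformulate the target.} Unwinding the definitions, the requested decomposition $\tilde s = s_1 s_2 \cdots s_n$ with each $s_i$ nonempty and using only the symbols $i, i+1$ exists, after a cyclic shift and possibly a reflection, precisely when the closed walk $W$ winds around $C_n$ exactly once and never \emph{folds back}: after orienting $W$ suitably it should use the edges $\{1,2\}, \{2,3\}, \dots, \{n,1\}$ in this cyclic order, dithering arbitrarily on the current edge but never returning to an edge it has already left behind. I would also note that such a $W$ automatically visits every vertex of $C_n$: a missing symbol $j$ would force two consecutive symbols of $\trace(F)$ to equal $j-1$ and $j+1$, contradicting Step 1. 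So the entire lemma reduces to the single assertion: \emph{if $|S| \ge 5$, then $W$ cannot fold back.}

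\textbf{Step 3: the geometric core, and the main obstacle.} This is where I expect the bulk of the work to lie. Assume toward a contradiction that $W$ folds. Using $|S| \ge 5$ and the connectedness of $S$, fix five consecutive symbols $p, p+1, p+2, p+3, p+4 \in S$; then $r(c_p), r(c_{p+2}), r(c_{p+4})$ are pairwise disjoint Jordan arcs, all three bordering $F$, and near $F$ the five arcs realize the crossing-path $c_p\!-\!c_{p+1}\!-\!c_{p+2}\!-\!c_{p+3}\!-\!c_{p+4}$. The crucial extra ingredient is that the remaining arcs $c_{p+5}, \dots, c_{p-1}$ of the cycle, which do not appear in $\trace(F)$, cannot even touch $\overline F$ (again by Step 1, since otherwise they would be recorded in the trace), yet by connectedness of $r(C)$ they form a connected ``return path'' $P$ joining a point of $r(c_{p+4})$ to a point of $r(c_p)$ while staying off $\overline F$. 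I would then argue, via a winding-number argument (equivalently, the Jordan curve theorem) for the boundary curve of $F$ together with $P$ inside the connected set $r(C)$, that $P$ must run around $F$ in one consistent rotational sense, which drags the trace monotonically along $C_n$ and is incompatible with $\trace(F)$ reversing direction along $c_p, \dots, c_{p+4}$.

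The reason Step 3 is the genuine obstacle — and why it must be a global argument rather than a local one — is that a folded trace such as $1,2,3,4,5,4,3,2$ is perfectly realizable by an arrangement of only five Jordan arcs: take a path of five arcs with free ends, whose complement is a single cell with exactly this trace. Only the presence of \emph{all} arcs of the cycle rules this out, because the return path is forced, by connectivity alone, to thread from a neighborhood of $c_{p+4}$ to a neighborhood of $c_p$ and thereby ``open up'' any fold. The hypothesis that at least five distinct symbols occur is exactly what supplies the three pairwise-disjoint arcs $c_p, c_{p+2}, c_{p+4}$ around which this routing argument can be run; with four or fewer symbols, honest small cells with folded traces genuinely exist, which is why the threshold in the statement is $5$.
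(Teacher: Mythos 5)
Your Step~1 (consecutive trace symbols differ by $\pm 1$, so the trace collapses to a closed walk $W$ in the cycle graph) matches the implicit starting point of the paper's argument, but the core of your proof has a genuine gap. In Step~3 you fix five consecutive symbols $p,\dots,p+4\in S$ and then treat ``the remaining arcs $c_{p+5},\dots,c_{p-1}$'' as arcs that \emph{do not appear} in $\trace(F)$, so that they form a return path $P$ disjoint from $\overline F$. But the hypothesis only says $S$ has at least five symbols; in the situations you must rule out, the trace will in general contain more than five (indeed, ultimately all $n$) symbols, so these ``remaining'' arcs do touch $\overline F$ and the return path off $\overline F$ need not exist. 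The winding-number contradiction you then gesture at is exactly the hard geometric content, and it is left unexecuted precisely where it is needed. The paper's proof takes a different, per-symbol route: for each $i$, the arcs $X_i$ at cycle-distance at least two from $i$ induce a connected subgraph, so $r(X_i)$ is a connected set disjoint from $r(c_i)$; since $r(c_i)$ is a single Jordan arc joining all occurrences of $i$ on the boundary of $F$ (and no curve enters $F$), it separates the boundary segments between consecutive occurrences of $i$, hence all symbols of $X_i$ lie in at most one such segment. This yields blocks $\overline s_i\in\{i-1,i,i+1\}^\star$ containing all occurrences of $i$; the ``at least five symbols'' hypothesis is then used purely combinatorially (no block contains another, no block is empty, the blocks occur in cyclic order), and $s_i:=\overline s_i\setminus\overline s_{i-1}$ gives the decomposition.

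There are two further soft spots in Step~2. First, ``$W$ never folds back'' (in the sense of never reversing direction) is not equivalent to the lemma: a closed walk can wind around $C_n$ twice without ever reversing, and such a trace also admits no decomposition $s_1s_2\cdots s_n$; your Step~3 only attacks direction reversal, so multiple winding is never excluded, whereas the paper's per-symbol separation claim kills it automatically (a second lap would place far-away symbols in two different gaps between occurrences of $i$). Second, the remark that a missing symbol $j$ would force two consecutive trace symbols $j-1$ and $j+1$ is unjustified: a walk confined to a proper arc of $C_n$ produces no such jump. That every symbol actually occurs (equivalently, that no block is empty) is something the paper has to prove from the structure, not a consequence of Step~1. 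To salvage your approach you would need to replace the ``return path of untraced arcs'' by a separation argument that works when all arcs are traced --- at which point you essentially arrive at the paper's per-symbol claim.
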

\begin{proof}
    We fist start by proving a weaker statement. We claim that there exist $n$ sequences $\overline s_1,\dots,\overline s_n$ such that $\overline s_i \in \{i-1,i,i+1\}$ and such that $\overline s_1,\dots,\overline s_n$ together "cover" the cyclic sequence $s$. 
    More precisely, let $k := |s|$ be the length of the trace, and consider a cycle $D$ on $k$ vertices (not related to the graph $G$). We can consider the cycle $D$ as labeled with the cyclic sequence $s = \trace(F)$.
    Our goal is to associate to each $i$ a cyclic arc $A_i$ of the cycle $D$, such that $|A_i| = \overline s_i$, and such that if we label the arc $A_i$ with the sequence $\overline s_i$ for each $i$, then in all the places where the arcs overlap, the strings agree, and so that all arcs together completely cover $D$ and their labels together result in the sequence $s$. Formally, we make the following claim: It is possible to find $\overline s_1,\dots,\overline s_n$ such that
    \begin{itemize}
        \item $\overline s_i \in \{i-1,i,i+1\}^\star$ for all $i=1,\dots,n$.
        \item $\overline s_1,\dots, \overline s_n$ together cover $s$.
        \item If $s = \trace(F)$ does not contain any symbol $i$, then $\overline s_i = \varepsilon$ is the empty sequence
        \item If $s = \trace(F)$ contains the symbol $i$, then $\overline s_i$ contains all occurrences of symbol $i$ and furthermore has symbol $i$ as its first and last letter.
    \end{itemize}
    We now prove the claim: Consider for each $i$ all occurrences of symbol $i$ on the cycle. 
    The situation is depicted in \Cref{fig:lem-boundary-curve-1} for $i=1$. 
    The set of occurrences of symbol $i$ splits the cycle into segments. 
    Consider all occurrences of symbols $X_i := \{1,\dots,n\} \setminus \{i-1,i,i+1\}$ (that is, symbols $3,\dots,n-1$ in \Cref{fig:lem-boundary-curve-1}). We claim that all these symbols must lie inside at most one segment. Indeed, assume there are $a,a' \in X_i$ in different segments. Since the subgraph induced by $X_i$ is connected, we have that $r(X_i)$ is a connected subset of the plane, and so $a,a'$ are connected by $r(X_i)$. However, $r(X_i)$ by definition does not intersect $r(i)$. 
    But this is impossible, since $r(i)$ is a Jordan curve connecting all appearances of symbol $i$, and so $r(i)$ splits all segments from each other, a contradiction. 
    (Note that by definition of $\trace(F)$, no curve can enter the inner of the boundary of $F$/the cycle $D$.)

    We can hence define $\overline s_i$ as the cyclic string starting and ending with symbol $i$ containing all segments except possibly the segment containing some symbol from $X_i$ (compare \Cref{fig:lem-boundary-curve-2}). Then all claimed properties of $\overline s_i$ are true and we have proven the claim.

    Next, we argue that for each pair $i \neq j$, it is impossible that $\overline s_i$ is completely contained in $\overline s_j$ (assuming $\overline s_i \neq \varepsilon$). By this, we mean that the corresponding cycle arcs are subsets of each other, i.e. $A_i \subseteq A_j$. Indeed, note that if this is the case, then $j \in \{i-1, i+1\}$. So for the sake of notation we can w.l.o.g.\ assume that $i = 1$ and $j = 2$. Then consider \Cref{fig:lem-boundary-curve-3}. Consider the symbols before and after $\overline s_2$.
    Since $\overline s_2$ starts and ends with a 2, and since $\overline s_1$ contains all occurrences of 1 and is contained in $\overline s_2$, both the symbol before and after $\overline s_2$ must be a 3. We use here that consecutive symbols on the cycle can differ by only one. 
    Consider now $\overline s_3$. The object $\overline s_3$ contains all occurrences of $3$, but cannot contain any 1 since $\overline s_3 \in \{2,3,4\}^\star$. So $\overline s_3$ must connect both 3's, but since $\overline s_1 \neq \varepsilon$ contains at least one 1, 
    the object $\overline s_3$ must wrap around the other way of the cycle. Then the whole cycle is contained in $\overline s_2 \cup \overline s_3$. But then $\trace(F) \in \{1,2,3,4\}^\star$. This contradicts our assumption that $\trace(F)$ contains at least 5 different symbols.

    We use the observation that no $\overline s_j$ contains any $\overline s_i$ in order to show the following structural insights:
    For all $i=1,\dots,n$, $\overline s_i = \varepsilon$ is actually impossible. To see this, let $\overline s_i = \varepsilon$ and w.l.o.g. $\overline s_{i+1} \neq \varepsilon$. Then symbol $i$ does not appear and $\overline s_{i+1} \in \{i+1, i+2\}$. Both the symbol before and after $\overline s_{i+1} \in \{i+1, i+2\}$ must be $i+2$, and so either $\overline s_{i+1}$ is contained in $\overline s_{i+2}$ or $\trace(F)$ has fewer than~$5$ distinct symbols, a contradiction. For the same reason, for each $i=1,\dots,n$ it is impossible that the symbols before and after $\overline s_i$ are both~$i-1$; or are both~$i+1$.
    
    Finally, we have arrived at the following conclusion: The strings $\overline s_1, \dots, \overline s_n$ must appear in exactly this order or its reflection along the cycle $D$. (By the order of appearance, we mean the order of the first time we encounter some element from $\overline s_i$ when going around the cycle). This follows since the two symbols before and after $\overline s_i$ are exactly $\{i-1,i+1\}$, and since $\overline s_i$ contains all symbols $i$.

    Finally, we show how to construct $s_1,\dots,s_n$ given $\overline s_1$. We simply define $s_i := \overline s_i \setminus \overline s_{i-1}$. Then since $\overline s_{i-1}$ contains all occurrences of $i-1$, we have that $s_i \in \{i,i+1\}$. As argued above, the strings $\overline s_1, \dots,\overline s_n$ are all nonempty, do not contain each other, and appear in exactly this order along the cycle. Then, we conclude that there is a cyclic shift and reflection $\tilde s$ of $s$ such that $s = s_1s_2\dots s_n$. The situation is depicted in \Cref{fig:lem-boundary-curve-4}. 
    
\end{proof}

\begin{figure}
     \centering
     \begin{subfigure}[b]{0.3\textwidth}
         \centering
         \includegraphics[scale=0.8,page=1]{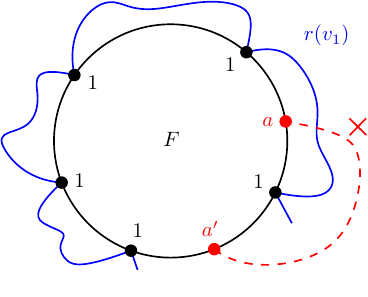}
         \caption{$a,a'$ cannot be in different segments.}
         \label{fig:lem-boundary-curve-1}
     \end{subfigure}
     \hfill
     \begin{subfigure}[b]{0.3\textwidth}
         \centering
         \includegraphics[scale=0.8,page=2]{figures/definition-s-bar.pdf}
         \caption{Definition of $\overline s_i$ for $i=1$.}
         \label{fig:lem-boundary-curve-2}
     \end{subfigure}
     \hfill
     \begin{subfigure}[b]{0.25\textwidth}
         \centering
         \includegraphics[scale=1]{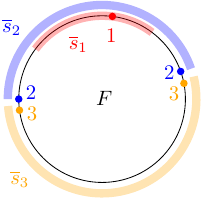}
         \caption{$A_1 \subseteq A_2$ is impossible}
         \label{fig:lem-boundary-curve-3}
     \end{subfigure}
     \hfill
     \vfill
    \begin{subfigure}[b]{0.3\textwidth}
         \centering
         \includegraphics[scale=1]{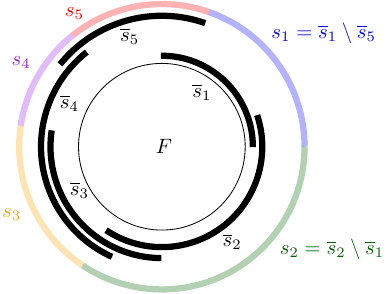}
         \caption{Definition of $s_i$ given $\overline s_1, \dots, \overline s_n$.}
         \label{fig:lem-boundary-curve-4}
     \end{subfigure}
        \caption{Arguments used in the proof of \Cref{lem:structureboundarycurve}}
        \label{fig:three graphs}
\end{figure}

The previous lemma will prove to be immensely helpful, since it has the following implication: If we have an induced cycle $C$, and we select a subset $Y$ of elements of pairwise distance at least two in $C$, then any cell in $r(C)$ that touches at least 5 elements must contain the elements of $Y$ in the same unique order along its trace (up to reflection and rotation).
Next, we argue that for our graph $G$ the realizations of certain vertices of the graph must be contained in some cell.

\subsection{Cell Lemma}

We start by introducing the notion of a \Connector.
Intuitively, a \Connector is just a set of vertices that separates
an induced cycle from the (connected) rest of the graph.

\begin{definition}\label{def:connectors}
    Let $G = (V,E)$ be a graph, and let $C\subseteq V$ be a set of vertices forming an induced cycle such that $G[V\setminus C]$ is connected.
    Let $D \subseteq V\setminus C$ be the neighborhood of $C$. If all of the following properties hold, we call $D$ the set of \textit{\Connector{s} of $C$}.
    \begin{itemize}[itemsep = 0pt]
        \item $D$ is an independent set.
        \item $G[V\setminus D]$ consists of the two connected components, $G[C]$ and $G[V \setminus (C\cup D)]$.
        \item Each $d\in D$ has exactly one neighbor in $C$, and for any two distinct $d,d'\in D$, these neighbors are distinct and non-adjacent.
    \end{itemize}
\end{definition}

This definition is illustrated in \Cref{fig:connectorsAndIntersectors}.

\begin{lemma}\label{lem:allinonecell}
    Let $C$ be an induced cycle of $G = (V,E)$ with \Connector{s} $D$, and let $r(V)$ be a geometric representation of $G$ by Jordan arcs. 
    Then $r(V\setminus (C\cup D))$ lies completely in one cell of the arrangement $r(C)$.
\end{lemma}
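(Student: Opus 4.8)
The goal is to show that the realization $r(V \setminus (C \cup D))$ lies in a single cell of the arrangement $r(C)$. Let me denote $W := V \setminus (C \cup D)$. The plan is to argue by connectivity and the separating property of the connectors. First I would observe that $r(W)$ is a connected subset of the plane: by Definition \ref{def:connectors}, $G[W]$ is connected, hence its realization (a union of Jordan arcs whose intersection pattern is a connected graph) is connected as a topological space. Since a connected set cannot be split by the arrangement $r(C)$ unless it actually meets $r(C)$, it suffices to show that $r(W)$ does not intersect $r(C)$ at all — then $r(W)$ is entirely contained in the complement $\mathbb{R}^2 \setminus r(C)$, which is a disjoint union of open cells, and a connected set in such a union lies in exactly one of them.

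So the heart of the argument is: no arc of $r(W)$ meets any arc of $r(C)$. This is exactly the statement that there are no edges between $W$ and $C$ in $G$, which is almost immediate from the definition: $D$ is the \emph{entire} neighborhood of $C$ in $V \setminus C$, so every vertex of $C$ has all of its non-$C$ neighbors inside $D$; equivalently, $G[V \setminus D]$ splits into the two components $G[C]$ and $G[W]$, so there is no edge of $G$ with one endpoint in $C$ and one in $W$. Since $r$ is a valid realization, $r(v) \cap r(w) = \emptyset$ whenever $vw \notin E$, so indeed $r(W) \cap r(C) = \emptyset$.

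Assembling these pieces: $r(W)$ is connected, $r(W) \cap r(C) = \emptyset$, hence $r(W) \subseteq \mathbb{R}^2 \setminus r(C)$. The set $\mathbb{R}^2 \setminus r(C)$ is partitioned into the (open) cells of the arrangement $r(C)$; since $r(W)$ is connected it must lie in a single such cell, which is what we wanted. I do not expect a serious obstacle here — the lemma is essentially a packaging of the defining properties of a connector together with the elementary topological fact that a connected set avoiding $r(C)$ stays in one complementary region. The one point that deserves a careful sentence is the claim that $r(W)$ is connected as a subspace of the plane: this uses that each $r(w)$ is a (connected) Jordan arc and that consecutive arcs along any path in the connected graph $G[W]$ share a point, so the union is path-connected. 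The role of the remaining connector axioms (that $D$ is independent, and that distinct connectors attach to distinct, non-adjacent vertices of $C$) is not needed for this particular lemma; they will matter later when one wants the connectors to pin down positions along the trace, so I would not invoke them here beyond noting they are available.
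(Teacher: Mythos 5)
Your proposal is correct and follows essentially the same argument as the paper: $G[V\setminus(C\cup D)]$ is connected and has no edges to $C$, so its realization is a connected subset of the plane disjoint from $r(C)$ and therefore lies in a single cell. The extra care you take (connectedness of the union of arcs, and noting which connector axioms are actually used) only elaborates the same reasoning.
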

\begin{proof}
    This follows from the fact that $G[V\setminus (C\cup D)]$ is connected and is not adjacent to the induced cycle~$C$. Therefore, $r(V \setminus (C \cup D))$ is a connected subset of the plane that does not intersect $r(C)$, and so completely lies in one of its cells.
\end{proof}

Finally, note that if in the previous lemma $|D| \geq 5$, then the precondition of \Cref{lem:structureboundarycurve} is also satisfied.

\begin{figure}[tb]
    \centering
    
    \includegraphics{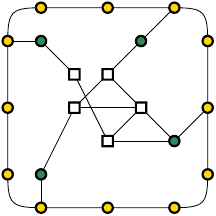} 
    \hspace{3cm}
    \includegraphics{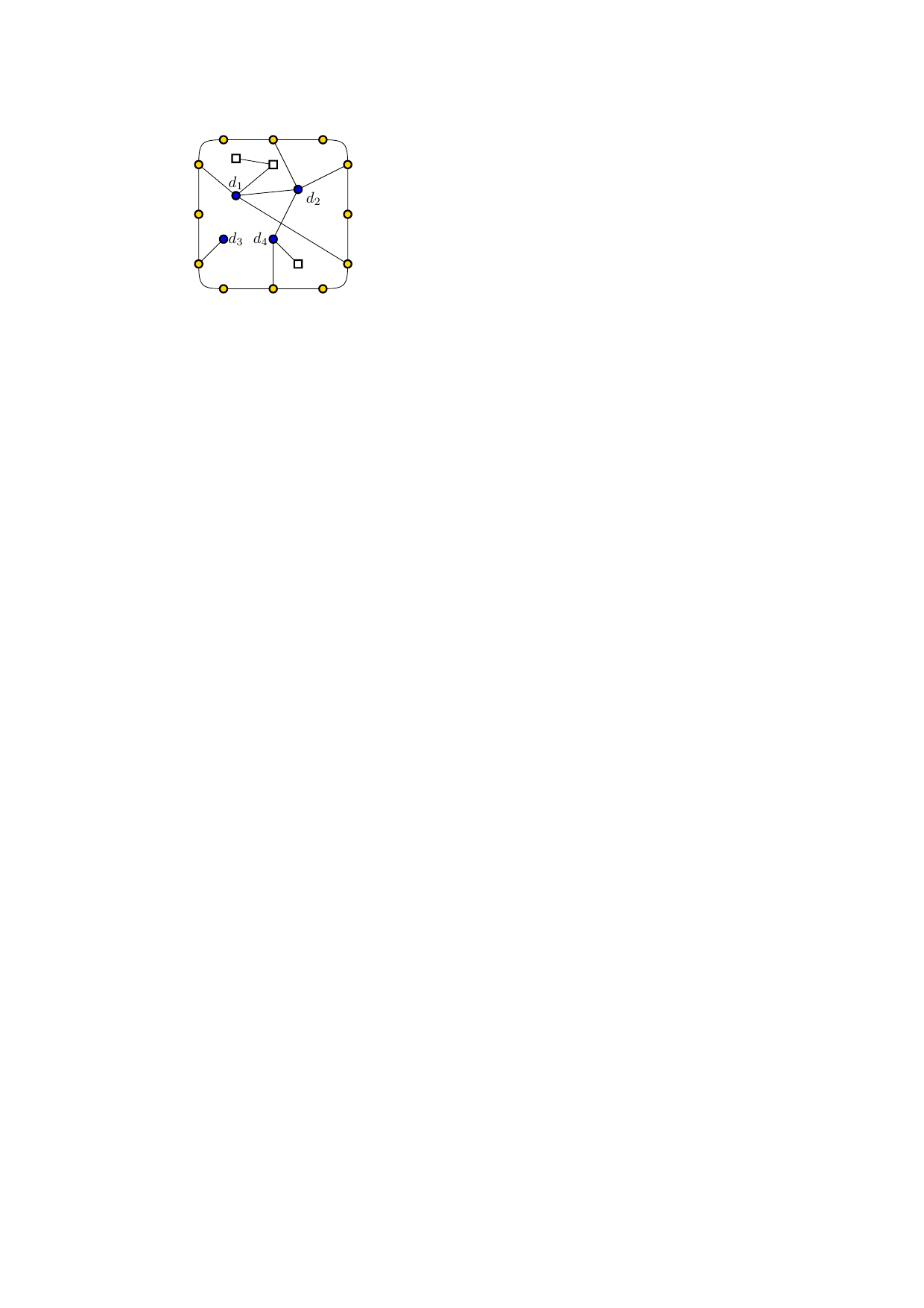}
    
    \caption{
    Left: 
    The yellow dots form an induced cycle. 
    The green vertices are connectors of this cycle and 
    the remaining white squares form the rest of the graph.
    Right: 
    The yellow dots form an induced cycle. 
    The blue dots are intersectors of this cycle. The cyclic order $o_G$ as used in \Cref{lem:sameorder} is $d_2,d_2,d_1,d_4,d_3,d_1$, when starting at the top left of the cycle and going clockwise. 
    }
    \label{fig:connectorsAndIntersectors}
\end{figure}

\subsection{Order Lemma}

In this section we show that
certain vertices neighboring an induced cycle
also need to intersect this cycle in the same order in any geometric representation.
We start with a definition of the setup.

\begin{definition}\label{def:intersectors}
    Let $G = (V,E)$ be a graph, and let $C\subseteq V$ be a set of vertices forming an induced cycle.
    Let $D \subseteq V\setminus C$ be a set of vertices with the following properties.
    \begin{itemize}[itemsep = 0pt]
        \item Each $d\in D$ has either one or two neighbors in $C$. 
        \item If $d\in D$ has two neighbors in $C$, then these are non-adjacent.
        \item For any two distinct $d,d'\in D$, their neighbors in $C$ are distinct and non-adjacent.
    \end{itemize}
    Then we call $D$ a set of \textit{intersectors of $C$}.
\end{definition}

This definition is illustrated in \Cref{fig:connectorsAndIntersectors}.
We now define two cyclic orders of intersections of intersectors with the cycle. The first order lives in the realm of the graph, while the second one is concerned with a concrete realization. The goal of this section is to prove that these orders are the same.

\begin{definition}[Graph Order of Intersectors]
    Let $G$ be a graph with $C$ an induced cycle on at least four vertices, and let~$D$ be a set of intersectors of~$C$. 
    When we travel along the cycle $C$ for one full rotation, 
    we write down the pairs $(c,d)\in C\times D$ such that $\{c,d\}\in E(G)$.
    This sequence of pairs defines the \textit{graph order of the intersectors} up to a cyclic shift and reflection.
\end{definition}

\begin{definition}[Geometric Order of Intersectors]
    Let $G$ be a graph with $C$ an induced cycle on at least four vertices, and let~$D$ be a set of intersectors of~$C$. 
    Let $r$ be a realization of $G$ by unit segments, and let $b$ be the interior boundary curve of $r(C)$. More generally, if $r(C)$ is a generalization with Jordan curves, let $b$ be the boundary curve of any cell $F$ such that $\trace(F)$ has at least 5 distinct symbols.
    When we travel along  $b$ for one full rotation, 
    we write down the pairs $(c,d)\in C\times D$ such that $r(d)$ intersects $b$ where $b$ is tracing the boundary of the cell $F$ in $r(c)$. From consecutive copies of the same pair we only keep one.
    This sequence of pairs defines the \textit{geometric order of the intersectors} up to a cyclic shift and reflection.
\end{definition}

    The graph order is illustrated in \Cref{fig:connectorsAndIntersectors}, while the geometric order is illustrated in \Cref{fig:orderalongcurve}.

\begin{lemma}\label{lem:sameorder}
    Let $r$ be a realization of a graph $G$ with an induced cycle $C$ and intersectors $D$. If for every intersection of some $r(d)$ for $d\in D$ with some $r(c)$ for $c\in C$ we have that $r(d)$ also intersects the boundary curve $b$ close to that intersection, then the geometric order and the graph order of the intersectors coincide.
\end{lemma}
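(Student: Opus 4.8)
The plan is to compare the graph order and the geometric order entry by entry by exploiting the structural restriction on the trace provided by \Cref{lem:structureboundarycurve}. First I would fix the cell $F$ whose boundary curve $b$ is used in the definition of the geometric order, so that $\trace(F)$ has at least $5$ distinct symbols; by \Cref{lem:structureboundarycurve} we may write (after a cyclic shift and possible reflection) $\trace(F) = s_1 s_2 \cdots s_n$ where each $s_i$ is nonempty and $s_i \in \{i, i+1\}^\star$. The key point I want to extract from this is that, as we travel once around $b$, the blocks of the cycle $C=(c_1,\dots,c_n)$ are encountered in the cyclic order $c_1, c_2, \dots, c_n$ (up to the overlap of $+0/+1$, and up to cyclic shift and reflection), i.e.\ the ``order of traced elements'' along $b$ agrees with the order along the abstract cycle $C$. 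This is exactly the statement that the order-enforcing cycle does its job.

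Next I would analyze what a single intersector $d \in D$ contributes to each of the two orders. For the graph order, travelling once around $C$ we write down $(c,d)$ exactly when $\{c,d\} \in E(G)$; since $d$ has either one or two neighbours in $C$, and in the two-neighbour case they are non-adjacent, $d$ contributes either one pair or two pairs, and in the latter case the two pairs $(c_a, d), (c_b, d)$ appear at two ``antipodal-ish'' (non-consecutive) positions. For the geometric order, I would use the hypothesis of the lemma: every crossing of $r(d)$ with some $r(c)$ is witnessed by a nearby crossing of $r(d)$ with $b$, and conversely by properties (i)--(iii) of the boundary curve every crossing of $r(d)$ with $b$ comes from a genuine crossing of $r(d)$ with $r(c)$ for the index $c$ whose block of the trace $b$ is currently following. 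Hence, reading off $b$, the pairs $(c,d)$ recorded are precisely $\{(c,d) : r(c)\cap r(d)\ne\emptyset\}$, which by the fact that $r$ realizes $G$ is precisely $\{(c,d): \{c,d\}\in E(G)\}$ — the same multiset of pairs that occurs in the graph order (modulo the ``keep one of consecutive duplicates'' convention, which only deletes repetitions created by a block $s_i$ being split or by $b$ oscillating between $r(c_i)$ and $r(c_{i+1})$ while staying near $r(d)$).

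It then remains to check that these pairs occur in the \emph{same cyclic order} in both sequences, not merely that they form the same set. Here I would argue as follows: both orders are indexed by a single traversal of a closed curve that visits the blocks of $C$ in the cyclic order $c_1,\dots,c_n$ — for the graph order this is the cycle $C$ itself, for the geometric order this is $b$, and the block order along $b$ matches that of $C$ by the previous paragraph. Within a given position, an intersector $d$ appears in the graph order attached to $c$ iff $\{c,d\}\in E$, and appears in the geometric order attached to $c$ iff $r(d)$ crosses $b$ while $b$ follows $r(c)$'s block; and these two conditions define the same attachment of $d$'s (one or two) occurrences to the (one or two) blocks $c$ it is adjacent to. Since the condition in \Cref{def:intersectors} guarantees distinct intersectors attach to distinct, pairwise non-adjacent vertices of $C$, there is no ambiguity about relative order of two different intersectors sharing a block, so matching the block order suffices to match the full order. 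I would package this as: the map sending each recorded pair in the graph order to ``the same pair'' in the geometric order is an order isomorphism of cyclic sequences, because it commutes with the common underlying block order on $C$.

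The main obstacle I anticipate is precisely the bookkeeping in that last step: making rigorous that ``$b$ follows the blocks of $C$ in the cyclic order $c_1,\dots,c_n$'' really transfers to the recorded intersector pairs, given that \Cref{lem:structureboundarycurve} allows each block $s_i$ to be split and allows a $+0/+1$ displacement, so that a single vertex $c_i$'s ``presence'' along $b$ may be spread over two stretches. I would handle this by noting that the displacement only ever interleaves $c_i$ with $c_{i+1}$ (consecutive in $C$, hence their intersector neighbourhoods are disjoint and non-adjacent by \Cref{def:intersectors}), so reassigning a boundary-curve crossing of $r(d)$ from the ``$c_i$ part'' to the ``$c_{i+1}$ part'' of the trace is impossible unless $d$ is genuinely adjacent to both — which is exactly the two-neighbour case, already accounted for — and in all other cases the assignment of $d$ to its block is forced. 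Combined with the ``delete consecutive duplicates'' convention absorbing the block-splitting, this pins down the geometric order to coincide with the graph order.
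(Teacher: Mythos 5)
Your proposal is correct and takes essentially the same route as the paper: both arguments rest on \Cref{lem:structureboundarycurve} together with the non-adjacency conditions of \Cref{def:intersectors} to show that each edge pair is recorded exactly once along $b$ (any duplicate recordings being consecutive and hence merged) and that both cyclic orders respect the common block order of $C$. One minor slip worth noting: the case you set aside, an intersector $d$ adjacent to both $c_i$ and $c_{i+1}$, cannot occur at all, since \Cref{def:intersectors} requires the two neighbours of an intersector to be non-adjacent — this only simplifies your bookkeeping.
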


\begin{figure}[tb]
    \centering
    \includegraphics{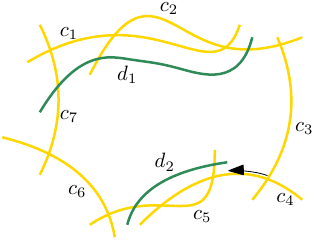}
    \caption{A realization $r$ of an induced cycle $C$ (yellow Jordan arcs) and intersectors $\{d_1,d_2\}$ of $C$ (green Jordan arcs). The cyclic order $o_r$ along the interior boundary curve as used in \Cref{lem:sameorder} is $(c_5,d_2),(c_7,d_1),(c_2,d_1)$, when starting at the bottom right and going clockwise (see black arrow).}
    \label{fig:orderalongcurve}
\end{figure}

\begin{proof}
    We first claim that every pair $(c,d)$ occurs in both orders exactly once if $\{c,d\}\in E(G)$, and zero times otherwise: In the graph order this holds by definition. Furthermore, the assumption of this lemma guarantees that any pair $(c,d)$ in the graph order of intersectors also occurs in the geometric order of intersectors. We thus need to show only that in the geometric order no pair occurs multiple times. A pair could occur more than once only if there is another pair $(c',d')$ between these occurrences. However, since $c$ is crossed only by $d$, and no neighbor of $c$ in $C$ is crossed by any intersector, \Cref{lem:structureboundarycurve} guarantees that no such pair $(c',d')$ can occur between the two occurrences of $(c,d)$.

    Next, we show that the pairs are also ordered the same way.
    Since for any two pairs $(c,d)$ and $(c',d')$ in the order the two vertices $c,c'$ are distinct and non-adjacent, \Cref{lem:structureboundarycurve} guarantees that the geometric order of the intersectors must respect the ordering of the $c_i$ along the cycle. Similarly, the graph order of intersectors must respect this ordering by definition. Thus, the two orders are the same. 
\end{proof}    

Note that the definition of the geometric order and \Cref{lem:sameorder} also work for the exterior boundary curve instead of the interior one.

\section{\ER-Membership}\label{sec:membership}
In this section, we show the \ER-membership parts of \Cref{thm:unit,thm:polylines}. 

There are two standard ways to establish \ER-membership.
The naive way is to encode the problem at hand as an ETR-formula. 
The second method describes a real witness and a real verification algorithm, similar as to how one can prove \NP-membership using a combinatorial verification algorithm. 
We describe both approaches.

In order to present the naive approach, we restrict our attention to only unit segments. However a similar technique also easily works for polylines. Let $G = (V,E)$ be a graph.
For each vertex $v\in V$, we use four variables $\var{v1},\var{v2},\var{v3},\var{v4}$ 
to describe the coordinates of the endpoints of a unit segment realizing $v$. 
We can construct ETR-formulas $\unit$ and $\intersection$ that test whether a segment has unit length and whether two segments  intersect, respectively.
The formula $\varphi$ for \UnitRec consists of the three parts:
\[\bigwedge_{v\in V} \ \unit (\var{v1},\var{v2},\var{v3},\var{v4}), \]
\[\bigwedge_{uv\in E} \  
\intersection
(\var{u1},\var{u2},\var{u3},\var{u4},\var{v1},\var{v2},\var{v3},\var{v4}),\]
and 
\[\bigwedge_{uv\not \in E} \ \lnot \, \intersection(\var{u1},\var{u2},\var{u3},\var{u4},\var{v1},\var{v2},\var{v3},\var{v4}).\]
The $\unit$ formula can be constructed using the formula for the Euclidean norm.
To construct the \intersection formula it is possible to use the orientation test: The orientation test formula checks whether a given  ordered triple of points is oriented clockwise, or counter-clockwise. Using multiple orientation tests on the endpoints of two segments, one can determine whether the segments intersect.
The orientation test itself can be constructed using a standard determinant test.
This finishes the description of the formula $\varphi$ and establishes \ER-membership of \UnitRec. 

Although all of these formulas are straightforward to describe, things get a bit lengthy (especially in the case of polylines) and we do hide some details about the precise polynomials. 
We therefore also describe the second approach using real witnesses and verification algorithms.
To use this approach we need to first introduce a different characterization of the complexity class \ER. 
Namely, an algorithmic problem is in \ER if and only if we can provide a \emph{real verification algorithm}~\cite{SmoothingGap}.
A real verification algorithm $A$ for a problem $P$ takes as input an instance $I$ of $P$ and a polynomial-size real-valued witness $w$.
$A$ must have the following properties: In case that $I$ is a yes-instance, there exists a $w$ such that $A(I,w)$ returns yes. In case that $I$ is a no-instance, $A(I,w)$ will return no, for all possible $w$.
Note that this is reminiscent of the definition of \NP using a verifier algorithm. 
There are two subtle differences: 
The first one is that $w$ is allowed to contain \emph{real} numbers and discrete values, not only bits.
The second difference is that $A$ runs on the \emph{real RAM} instead of a Turing machine. This is required since a Turing machine is not capable of dealing with real numbers.
It is important to note that $I$ itself does not contain any real numbers.
We refer to Erickson, van der Hoog, and Miltzow~\cite{SmoothingGap} for a detailed definition of the real RAM.

Given this alternative characterization of \ER, it is now very easy to establish \ER-membership of \UnitRec and \PolyRec:
We merely need to describe the witness and the verification algorithm.
The witness is a description of the coordinates of the unit segments (or polylines, respectively) realizing the given graph.
The verification algorithm checks that the realizations of vertices do or do not intersect one another, and in the case of unit segments, also checks that all segments have the correct length.
Note that we sweep many details of the algorithm under the carpet. 
However, algorithms are much more versatile than formulas and it is  a well-established fact that algorithms are capable of all types of elementary operations needed to perform this verification. 

\section{\ER-Hardness}

In this section we prove \ER-hardness, first for unit segments, then for \polylines.
The reduction for \polylines builds upon the reduction 
for unit segments, and we will only highlight the differences.
\subsection{\ER-Hardness for Unit Segments}

We show \ER-hardness of \UnitRec by a reduction from \stretchability.

\begin{figure}[tb]
    \centering
    \includegraphics{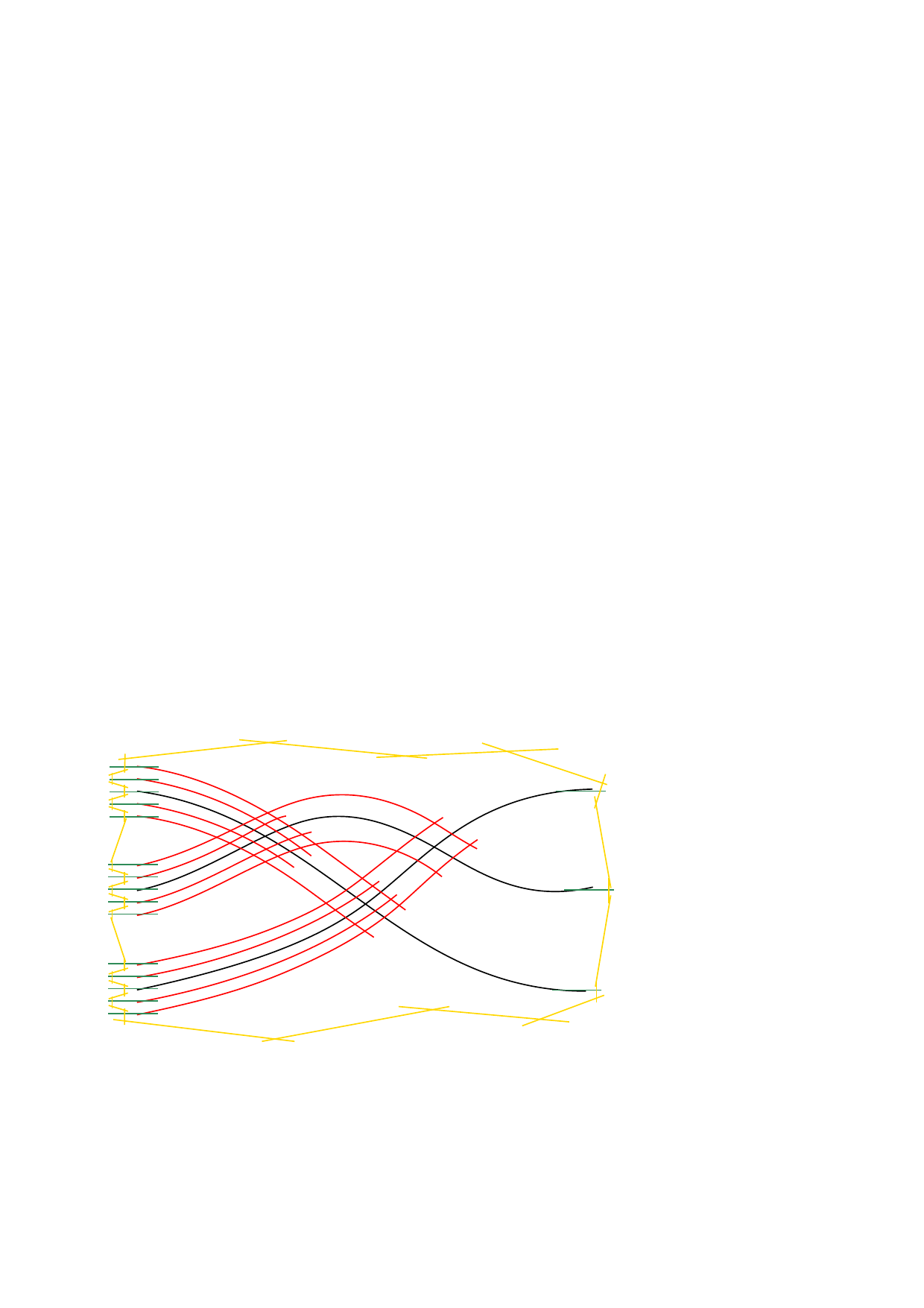}
    \caption{A pseudoline arrangement (black), enhanced with probes (red), connectors (green) and a cycle (yellow).
    }
    \label{fig:arrangementenhanced}
\end{figure}


We present the reduction in three steps. 
First, we show how we construct a graph $G$ from a pseudoline arrangement \A. 
Next we show completeness, i.e., we show that if \A is stretchable then $G$ can be represented using unit segments.
At last, we show soundness, i.e., we show that if $G$ can be represented using unit segments then \A is stretchable.
For this part we will use the lemmas from \Cref{sec:Cycle}.

\paragraph{Construction.}
Given a pseudoline arrangement \A of $n$ pseudolines, we construct a graph $G$ 
by enhancing the arrangement \A with additional Jordan arcs. 
Then we define $G$ to be the intersection graph of the pseudolines (which are also Jordan arcs) and all our added arcs.
See \Cref{fig:arrangementenhanced} for an illustration.

First, we add so called \emph{probes} to our pseudolines. 
A probe of \emph{depth} $k$ of pseudoline $\ell$ is a Jordan arc that starts on 
the left vertical line, and follows $\ell$ closely through the arrangement
until it has intersected $k$ other pseudolines (and all probes of other pseudolines that reach this intersection). 
For each pseudoline $\ell$, we add $2(n-1)$ probes: Above and below $\ell$, we add one probe each for each depth $1\leq k \leq n-1$.
This gives a total of $p := 2n(n-1)$ probes.
The probes are sorted according to their depth, with the probes of smallest depth situated closest to $\ell$. 
Note that so far we have added $n + p$ Jordan arcs.
We now create \emph{connectors} for each probe and pseudoline. A connector is a short Jordan arc added to the left and/or right end of another arc. For probes, we only add connectors at the left end, while pseudolines get connectors at both ends.
Note that we thus add $d := 2n+p$ connectors.
Finally, we add Jordan arcs forming a cycle of length $2d + 6$ around the current arrangement. On the left and right side, this cycle is placed in such a way that every second arc of the cycle intersects a connector, in the correct order. The eight remaining arcs of the cycle connect the left and right side of the cycle, using four arcs on the top and four arcs on the bottom.

We call the collection of 
all these arcs (including the pseudolines) the \textit{enhanced arrangement}.
The graph $G$ is the intersection graph of this arrangement.
Note that $G$ could also be described purely combinatorially, and it can be constructed in polynomial time.

\begin{figure}[tb]
    \centering
    \includegraphics{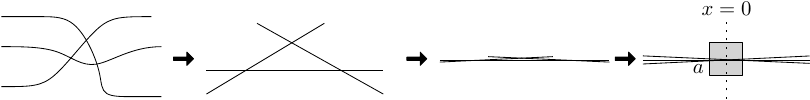}
    \caption{Left to Right: pseudoline arrangement \A; stretched line arrangement \L, squeezed line arrangement to have small slopes; squeezed to have all intersections in a box of side length $a$.}
    \label{fig:squeeze}
\end{figure}

\paragraph{Completeness.}
In this paragraph, we show 
that if  \A is stretchable then $G$ is realizable by unit segments.
We assume that \A is stretchable and show how to place each unit segment realizing $G$.
We call the segments representing the pseudolines, probes, connectors, and the cycle
\textit{important segments}, \textit{probe segments}, \textit{connector segments}, and \textit{cycle segments}, respectively.

As the pseudoline arrangement \A is stretchable, there exists a combinatorially equivalent line arrangement \L. 
The arrangement~\L can be compressed (scaled down along the vertical axis) such that the slopes of all lines lie in some small interval $[-a,a]$ for, say, $a = 1/20$. 
Additionally, we can move and scale \L even more to ensure that all 
intersection points lie in the square $[-a,a]^2$.
See \Cref{fig:squeeze} for an illustration.

We now truncate all the lines of \L to get unit segments that have the same intersection pattern as the corresponding lines. 
The truncation is performed symmetrically around the vertical line $x=0$.
The resulting unit segments are our \emph{important segments}.

For the next step we consider an important segment $s$ and construct its corresponding \emph{probe segments}. 
We place all probe segments parallel to its important segment $s$, with sufficiently small distance to~$s$ and to each other.
The probe segments are placed as far towards the left as possible, while still reaching the intersections of important segments they need to reach. Since by construction of the scaled line arrangement \L, all intersections lie within $[-a,a]^2$ and the probes can go only until there, each probe segment and its corresponding important segment are almost collinear but shifted by roughly $0.5$ longitudinally.
See \Cref{fig:Unit-Completeness} for an illustration of the placement of the probe segments.

Next we need to describe the placement of the \emph{connector segments}. 
Note that on the right side, we only have important segments. 
We add all the connector segments in such a way that they
lie on the same line as the important/probe segment they attach to.
The connectors can overlap with the segments they attach to for a large part since the first intersection of any probe and important segments only occurs 
in the square $[-a,a]^2$. 
This allows us to place our connectors such that all connectors on the left (right) side of the drawing end at the same left (right) $x$-coordinate. 

\begin{figure}[bt]
    \centering
    \includegraphics{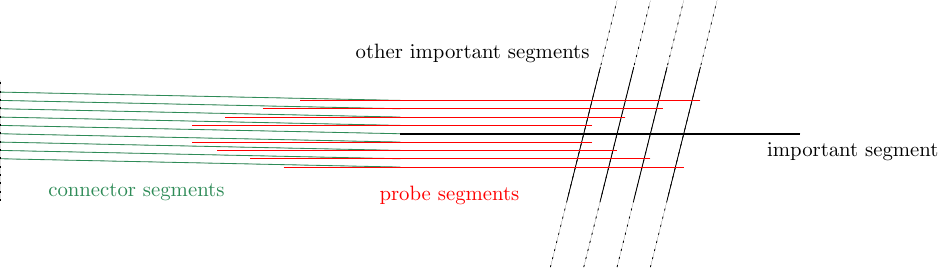}
    \caption{Given the line arrangement \L, we can construct the probe segments and connector segments using unit segments.
    The connector segments are illustrated slightly tilted, for the purpose of readability.
    They should be parallel to the segment that they connect to.
    }
    \label{fig:Unit-Completeness}
\end{figure}

\begin{figure}[bt]
    \centering
    \includegraphics[page = 2]{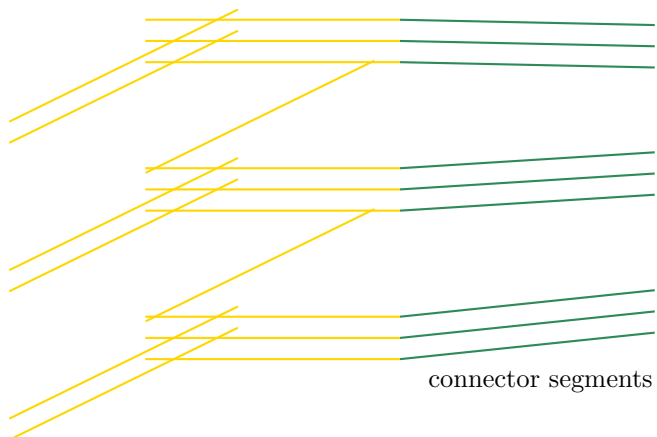}
    \caption{We can place all the cycle segments correctly on the left and on the right using a simple sawtooth pattern.
    }
    \label{fig:Connector-Segments}
\end{figure}

Finally, we can draw the cycle segments. For this, we simply make a sawtooth pattern on the left and on the right.
See \Cref{fig:Connector-Segments} for an illustration.
In  our sawtooth pattern, every second cycle segment is horizontal,
and all other cycle segments have the same, slightly positive slope. The horizontal segments attach to the connector segments.
As the important segments all have a very small slope, we can never run into a situation where two horizontal cycle segments would be too far away from each other to be connected.
We connect the left and right sawtooth patterns to close the cycle, using our eight additional cycle segments. 
(For the following reason, eight segments are sufficient for this task: Since the left and right sawtooth patterns are connected by a path that consists of a left connector segment, an important segment, and a right connector segment, their distance is at most $3$, allowing us to form both the top and the bottom boundary of the cycle with at most four cycle segments each.)

\paragraph{Soundness.}
In this paragraph, we show that
 $G$ being realizable using unit segments implies that \A is stretchable.
We thus assume that there exists a realization $r$ of $G$ by unit segments. 
Similarly to the last paragraph, we denote the sets of vertices representing the pseudolines, probes, connectors, and the cycle by \emph{important vertices $I$}, \emph{probe vertices $P$}, \emph{connector vertices $D$}, and \emph{cycle vertices $C$}.

Note that $C$ forms an induced cycle in $G$ and $D$ is a set of connectors of $C$ as in \Cref{def:connectors} (thus motivating the name). 
By \Cref{lem:allinonecell}, $r(C)$ splits the plane into two cells, and $r(I\cup P)$ is contained completely in one of these two cells. Without loss of generality, we assume $r(I\cup P)$ is contained in the inner (bounded) cell, however all following arguments would also work with regards to the outer cell. 
Thus, every segment in $r(D)$ 
intersects the cycle~$r(C)$ from the inside, i.e., it
intersects the interior boundary curve.
We can thus apply \Cref{lem:sameorder}, and get that $r(D)$ is ordered along the interior boundary curve of~$C$ in the same way (up to cyclic shift and reflection) as
it is in the enhanced arrangement from which we defined $G$.
Specifically, we know that the important segments and probe segments are ordered  as in our enhanced arrangement; see \Cref{fig:arrangementenhanced}.

We now claim that the arrangement of the important segments $r(I)$ is combinatorially equivalent to \A.
In order to show this claim, our proof strategy is to define two orientations: One orientation $O_{\A}$ of the pseudolines in the pseudoline arrangement $\A$, 
and one orientation $O_r$ of the important segments $r(I)$ in the realization $r$.
We show that the order in which pseudolines intersect other pseudolines with respect to $O_\A$ is equal to the order in which important segments are intersected by other important segments with respect to $O_r$. 
If we can show this claim, we are done, since the order of important segments and unit segments along the interior boundary in $r(I)$ is the same as in our enhanced arrangement. Therefore, (by extending the important unit segments $I$ to infinite lines) we obtain that $\A$ is stretchable.

It remains to show the claim. The orientations $O_\A$ and $O_r$ are defined as follows: In the orientation $O_\A$, 
every pseudoline is simply oriented from left to right (where left and right correspond to the directions in \Cref{fig:arrangementenhanced}). The orientation $O_r$ is defined as follows: 
Consider some important vertex $v \in I$, and its corresponding important segment $r(v)$. 
Note that $v$ has two neighbors $d_1,d_2$ in $G$ such that $d_1,d_2 \in D$ are connector vertices.
We can assume that $d_1$ is a connector segment corresponding to a connector on the left (of \Cref{fig:arrangementenhanced}) and $d_2$ is a connector segment corresponding to a connector on the right (of \Cref{fig:arrangementenhanced}).
Since $v,d_1$ are neighbors in $G$, the unit segment $r(v)$ has a crossing $c_1$ with $r(d_1)$. Likewise it has a crossing $c_2$ with $r(d_1)$. In the orientation $O_r$ we orient the segment $r(v)$ such that $c_1$ comes before $c_2$. (Note that if the unit segments $r(d_i), r(v)$ are not parallel for $i=1,2$ then the crossings $c_1,c_2$ are unique and $c_1 \neq c_2$, and so $O_r$ is well-defined. In the case that $r(d_i), r(v)$ are parallel it is still well-defined since $r(d_1),r(d_2)$ do not intersect.)

Now we are ready to show the claim. We pick three pseudolines $\ell$, $a$, and $b$ 
such that $\ell$ intersects $a$ before~$b$ in the orientation $O_\A$. 
We prove that in the orientation $O_r$, the unit segment $r(\ell)$ must also intersect the segment $r(a)$ before $r(b)$. (In order to simplify the notation, we use the same symbol for a pseudoline, as well as its corresponding vertex in $G$).

Now, suppose for the purpose of a contradiction that $r(\ell)$ intersects $r(b)$ before $r(a)$. 
Consider the following curves:
\begin{itemize}[itemsep = 0pt]
    \item the outermost probe segments of $\ell$,
    \item their connector segments,
    \item a part of the interior boundary curve of the cycle segments, and
    \item the important segment $b$.
\end{itemize}
Since these segments form an induced cycle in the graph $G$, these segments bound a cell $E$. In this cycle, $\ell$ and its connector segment $d_1$ (the one attaching to the cycle between the probes) form a path of length two between two vertices of the cycle. Since $d_1$ is a connector segment, due to the previous observation about the fixed order of connector segments, $d_1$ lies between the outer two connector segments. Thus $r(d_1) \cup r(\ell)$ split $E$ into two parts $E_1,E_2$. See \Cref{fig:cells} for an illustration. Note that $r(\ell)$ must be oriented from the end contained in $E$ towards the other end since the intersection of $r(d_1)$ and $r(\ell)$ lies in $E$, but the intersection of $r(d_2)$ and $r(\ell)$ cannot. Since we assume that $r(\ell)$ intersects $r(b)$ before $r(a)$, the intersection of $r(a)$ and $r(\ell)$ thus lies outside of $E$.

We consider the two probe segments $p$ and $q$ of $\ell$, which correspond to the intersection with $a$. 
These segments are attached to the cycle between the outermost probe segments of $\ell$, and $\ell$ itself. Furthermore, $p$ and $q$ are not intersecting any segment bounding $E_1$ or $E_2$, in particular not $r(b)$.
Thus the probe segments $p$ and $q$ are both completely contained in the interiors of $E_1$ and $E_2$, respectively. 
However, $r(a)$ can intersect the interior of only one of $E_1$ and $E_2$, but not both, since $r(a)$ and $r(\ell)$ are line segments and their single intersection is assumed to lie outside of $E$.
Since both $p$ and $q$ must intersect $r(a)$, we arrive at the desired contradiction. We thus conclude that \A is stretchable, finishing the proof of \Cref{thm:unit}.

\begin{figure}[tb]
    \centering
    \includegraphics[page =2]{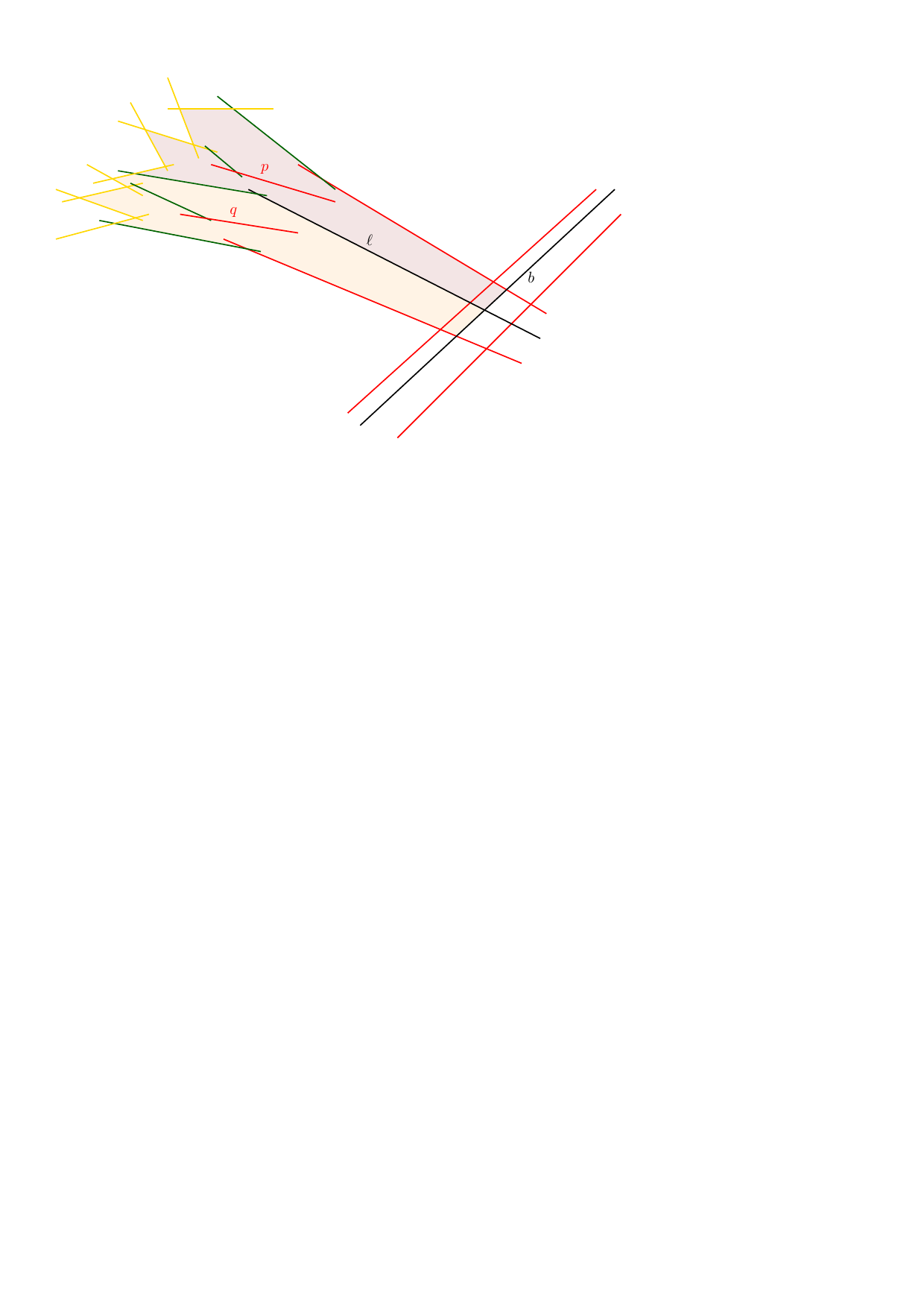}
    \caption{
    Suppose  for the purpose of contradiction that $\ell$ intersects $a$ after $b$.
    Then the important segment $a$ can only intersect one of the cells $E_1$ and $E_2$. However, $E_1$ and $E_2$ contain the inner probe segments $p$ and $q$ in their respective interiors. Since $a$ must intersect $p$ and $q$, we have a contradiction.}
    \label{fig:cells}
\end{figure}

\subsection{\ER-Hardness for \Polylines}

We study \polylines with $k$ bends and we assume $k>0$ is a fixed constant (note that the case $k=0$ corresponds to segments, for which the recognition problem has already been proven \ER-complete in \cite{kratochvil1994intersection}). 
Specifically, we will show that \PolyRec is \ER-complete.

The proof for \polylines works very similarly to the proof for unit segments. 
Since the family of \polylines is a strict superset of the family of unit segments, 
most of our additional work is on the soundness of the proof. 
To be able to ensure soundness, our construction 
of the graph $G$ will make sure that the polylines realizing the pseudolines cannot contain any bends in some region encoding the pseudoline arrangement \A in any realization of $G$.
With this property, the argument for soundness (as in the proof of \Cref{thm:unit}) will carry over straightforwardly.

\paragraph{Construction.}
As in the construction for unit segments, we enhance our arrangement \A of $n$ pseudolines using additional Jordan arcs, and let our graph $G$ be the intersection graph of the enhanced arrangement.

The whole construction is illustrated in \Cref{fig:polylinearrangementenhanced}. To create the enhanced arrangement, we first create the \emph{frame}:
We create $2k+2$ vertical chains of $4n+1$ line segments each. We call the set of segments forming the $i$-th such vertical chain $C_i$, for $i = 1\dots, 2k+2$, numbered from left to right. We then connect these vertical chains by two horizontal chains, one at the top and one at the bottom. More precisely, the top chain connects all the top segments of $C_i$ for $i = 1\dots, 2k+2$ such that two segments have distance three. The bottom chain behaves analogously.
All of the segments forming these chains are called \emph{frame segments}, and they together bound $2k+1$ bounded cells. We call the leftmost such cell the \emph{canvas}.

We now place our pseudoline arrangement in the canvas. 
For every pseudoline $p$, we first introduce a parallel \emph{twin} pseudoline $p'$ closely below $p$. Since \A is assumed to be simple, every pseudoline has the same intersection pattern with all the other pseudolines as its twin. 

Next, we introduce probes from the left of our pseudoline arrangement, as we did in the proof of \Cref{thm:unit}. Note that every pair of twin pseudolines shares one set of probes. We attach the probes as well as the pseudolines and their twins to $C_1$ using connector arcs, making sure that each connector intersects a unique arc of $C_1$, and that none of these arcs are intersecting.

We now want to weave the pseudolines and their twins through $C_2,\ldots, C_{2k+1}$. To do this, we split each set $C_i$ into $n$ \emph{lanes} $L_{i,1},\ldots,L_{i,n}$: A lane is a set of three consecutive arcs of $C_i$. 
The lanes $L_{i,j}$ are pairwise disjoint, and no arc of any lane $L_{i,j}$ may intersect an arc of a lane $L_{i,j'}$. 
Note that the lanes are ordered along $C_i$ with $L_{i,1}$ being the topmost lane and $L_{i,n}$ the bottommost. We also number our pseudolines from $p_1$ to $p_n$ by their order at the right end of the arrangement, from top to bottom. For every pseudoline $p_j$, we now extend $p_j$ and its twin $p_j'$ as follows: For each even $i$, $p_j$ goes through the top arc of the lane $L_{i,j}$, and for each odd $i$, it goes through the bottom arc of $L_{i,j}$. The twin $p_j'$ is extended in the opposite way, going through the top arc of $L_{i,j}$ if $i$ is odd, and through the bottom arc for $i$ even. 

At the very end, at $C_{2k+2}$, we use a connector arc to attach each pseudoline $p_j$ to the top element of $L_{2k+2,j}$, and its twin $p_j'$ to the bottom element. 

\begin{figure}[tb]
    \centering
    \includegraphics{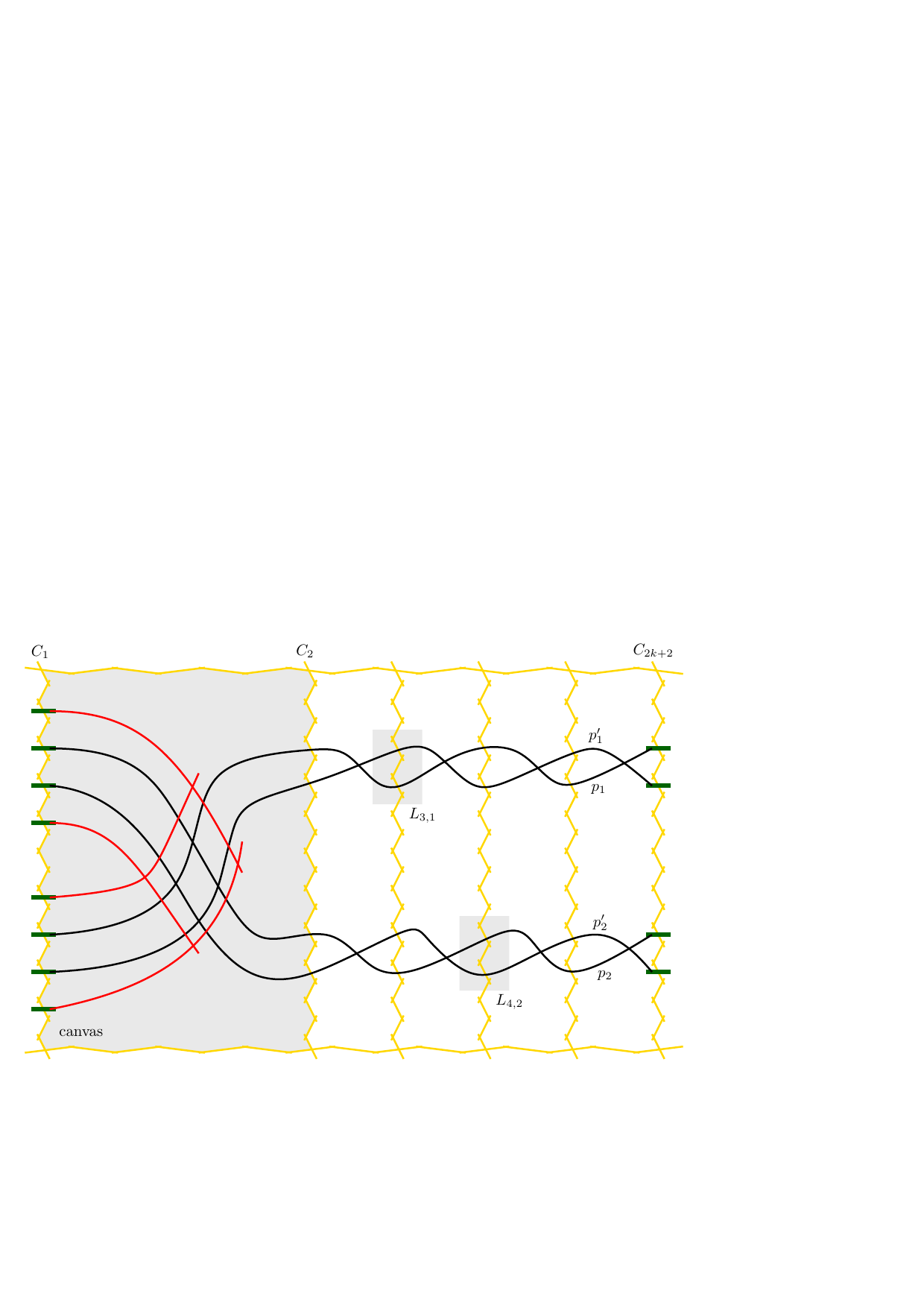}
    \caption{A doubled pseudoline arrangement (black) enhanced for $k=2$ with probes (red), connectors (green) and the frame (yellow). The shaded region is the canvas.
    }
    \label{fig:polylinearrangementenhanced}
\end{figure}

\paragraph{Completeness.} 
Given a line arrangement \L combinatorially equivalent to \A, we want to show that $G$ is realizable by \polylines. 
It is easy to see that given \L, the canvas and its contents (probes, connectors, pseudolines, and pseudoline twins) can be realized even with line segments (i.e., using $0$ bends), as we argued already in the proof of \Cref{thm:unit}. The only remaining difficulty is to argue that the extended pseudolines  weaving through $C_2,\ldots,C_{2k+1}$ and finally attaching to $C_{2k+2}$ can be realized using at most $k$ bends per pseudoline and twin.

Since between every two lanes $L_{i,j}$ and $L_{i,j+1}$ there is at least one \polyline that is not part of any lane, we can make the distance between these lanes arbitrarily large. We thus need to show only that for a single pair $p_j$, $p_j'$, we can weave the \polylines through $L_{2,j},\ldots,L_{2k+2,j}$. The construction is illustrated in \Cref{fig:weaving}: The two \polylines exit the canvas in parallel.
Then, going from left to right, we first use all bends of $p_j$, leaving its twin straight. After having gone through $L_{2,j},\ldots,L_{k+2,j}$, we leave $p_j$ straight and use the bends of the twin to go through the remaining lanes $L_{k+3,j},\ldots,L_{2k+1,j}$ and to have the right ordering to attach to $C_{2k+2}$ using a connector.

\begin{figure}[tb]
    \centering
    \includegraphics{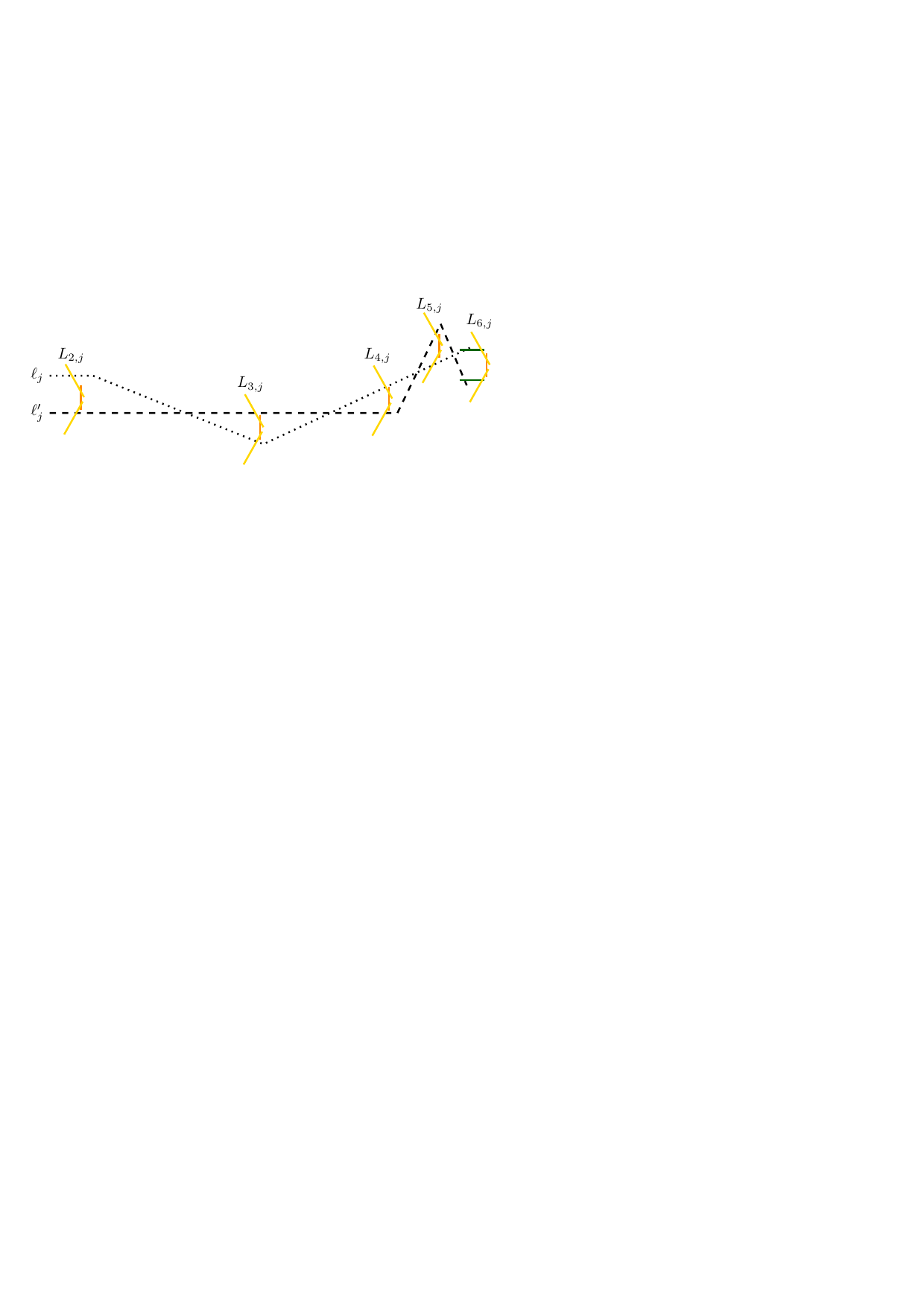}
    \caption{Weaving the two copies (dashed and dotted) of a pseudoline through the $2k$ chords with at most $k$ bends per \polyline. Illustrated for $k=2$.}
    \label{fig:weaving}
\end{figure}

\paragraph{Soundness.}
Given a realization of the graph $G$ with \polylines, we want to argue that \A is stretchable. To achieve this, we first prove some structural lemmas
about all realizations of $G$.

\begin{lemma}
\label{lem:polylines-contained-in-canvas}
    In every realization $r(G)$ of $G$, the polylines realizing all probes, pseudolines, and internal vertices of $C_2,\ldots,C_{2k+1}$ must lie in the same cell of the arrangement created by the Jordan curves corresponding to the cycle bounded by $C_1$, $C_{2k+2}$, the top chain, and the bottom chain.
\end{lemma}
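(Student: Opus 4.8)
The plan is to follow the proof of \Cref{lem:allinonecell}. Let $\mathcal C$ be the set of vertices realized by the arcs of $C_1$, of $C_{2k+2}$, of the top chain and of the bottom chain; these arcs glue together end to end into a cycle, and a routine inspection of the construction (distinct vertical chains are disjoint, the two horizontal chains are disjoint from each other, and each horizontal chain meets $C_1$ and $C_{2k+2}$ only at their extreme arcs) shows that $G[\mathcal C]$ is an induced cycle. Let $W$ be the set of vertices realized by the probes, by the pseudolines and their twins, and by the \emph{internal} arcs of $C_2,\dots,C_{2k+1}$ (all arcs of those chains other than their topmost and bottommost one); equivalently $W = V(G)\setminus(\mathcal C\cup D)$, where $D$ collects all connector arcs together with the topmost and bottommost arcs of $C_2,\dots,C_{2k+1}$. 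I will show that $G[W]$ is connected and that no edge of $G$ joins a vertex of $W$ to a vertex of $\mathcal C$; then, exactly as in the proof of \Cref{lem:allinonecell}, $r(W)$ is a connected subset of the plane disjoint from $r(\mathcal C)$, hence it lies in a single cell of the arrangement $r(\mathcal C)$. Alternatively one checks that $D$ is a set of connectors of $\mathcal C$ in the sense of \Cref{def:connectors} and quotes \Cref{lem:allinonecell} directly.

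For the non-adjacency claim I would walk through the few kinds of edge that can leave $W$: probes touch $C_1$ only through their connector arcs and otherwise stay inside the canvas, so they meet no arc of $\mathcal C$; pseudolines and twins touch $C_1$ and $C_{2k+2}$ only through connector arcs, and while weaving through $C_2,\dots,C_{2k+1}$ they cross only lane arcs, which can be taken to be internal arcs, and they never meet the horizontal chains; and an internal arc of $C_i$ with $2\le i\le 2k+1$ is adjacent only to its chain-neighbours inside $C_i$ and to the pseudolines and twins crossing it, so in particular to nothing in $\mathcal C$. For the connectedness of $G[W]$: the pseudolines pairwise cross inside the canvas; each twin $p_j'$ crosses every $p_{j'}$ with $j'\ne j$ and so joins that clique; every probe has depth at least one and hence crosses at least one pseudoline; and the internal arcs of each $C_i$ form a subpath of the chain that is crossed by the pseudolines weaving through its lanes, so these arcs are attached to the pseudoline clique as well. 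Hence $G[W]$ is connected.

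The content of the lemma is essentially this bookkeeping against the construction, so the obstacles are all of that flavour: one must remember that it is the connector arcs, not the probes or pseudolines themselves, that attach to $C_1$ and $C_{2k+2}$, and --- the one genuinely load-bearing point --- one must make sure the three arcs of every lane $L_{i,j}$ can be chosen among the non-extreme arcs of $C_i$ (there is room, since each chain has more arcs than the lanes occupy), so that no weaving pseudoline or twin is forced to cross the topmost or bottommost arc of $C_i$. If that failed, the internal arcs of some $C_i$ would form an isolated component of $G[W]$ and $r(W)$ could straddle two cells. Once these points are settled, the conclusion is immediate from the ``connected set avoiding a cycle lies in one cell'' principle used in \Cref{lem:allinonecell}.
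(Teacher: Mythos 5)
Your proposal is correct and follows essentially the same route as the paper: the paper's proof also takes $D$ to be the connector arcs together with the outermost segments of $C_2,\ldots,C_{2k+1}$, notes that the remaining vertices induce a connected subgraph, and invokes \Cref{def:connectors} and \Cref{lem:allinonecell}. Your additional bookkeeping (in particular that the lanes can be kept off the extreme arcs of each $C_i$, which the choice of $4n+1$ segments per chain guarantees) is just a more explicit verification of what the paper asserts implicitly.
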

\begin{proof}
    The induced subgraph of this set of vertices is connected. The connectors attaching them to the outermost cycle together with  the outermost segments of $C_2,\ldots,C_{2k+1}$ thus form a set of connectors as defined in \Cref{def:connectors}. Thus the lemma follows from \Cref{lem:allinonecell} immediately.
\end{proof}

Let us call the cell defined by \Cref{lem:polylines-contained-in-canvas} the cell $F_\text{region}$. We again assume without loss of generality that $F_\text{region}$ is not the outer cell. This is because if it is the outer cell, we can perform a circular inversion. Under such a circular inversion all Jordan curves remain Jordan curves and the transformed drawing is still a valid geometric representation.
We now observe that $\trace(F_\text{region})$ has at least 5 different elements, and so \Cref{lem:structureboundarycurve} is applicable, and we know that the order of elements appearing along the boundary of $F_\text{region}$ is almost fixed. 
In particular, this implies that the chains $C_2,\dots,C_{2k+1}$ appear in order from left to right in the cell $F_\text{region}$, subdividing it into a total of $2k+1$ disjoint sub-regions, where the first such sub-region is the canvas. 
(Note that locally, some small cells with only few elements on the boundary might be formed by the Jordan curves, but these do not influence the correctness of our argument.)  
We furthermore conclude that the curves $r(p_j)$ and $r(p_j')$ must traverse $F_\text{region}$ and while doing so cross every of the $2k+1$ sub-regions.
We can thus apply \Cref{lem:sameorder} to the sub-regions to enforce the ordering of crossings along them.
From this, similarly as in the proof of soundness for unit segments (note that that proof did not use the fact that the line segments were unit), we wish to get that the arrangement obtained by picking either $r(p_j)$ or $r(p_j')$ for every $j\in [n]$ restricted to the canvas has the same combinatorial structure as \A. However, for this it remains to show that at least one of $r(p_j)$ and $r(p_j')$ must be a straight line within the canvas. To show this, we first show that the two polylines $r(p_j)$ and $r(p_j')$ must cross often:

\begin{lemma}\label{lem:crossingpercell}
    In every sub-region defined by the cycle enclosed by two consecutive sets $C_i,C_{i+1}$ for $2\leq i\leq 2k+1$, we have that for all $j\in\{1,\ldots,n\}$ the curves $r(p_j)$ and $r(p_j')$ must cross.
\end{lemma}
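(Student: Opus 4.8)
The plan is to argue that within each sub-region between consecutive frame-chains $C_i$ and $C_{i+1}$ (for $2 \leq i \leq 2k+1$), the two curves $r(p_j)$ and $r(p_j')$ enter and exit on the ``wrong'' sides relative to each other, forcing a crossing by a Jordan-curve parity argument. Recall from the construction that for each index $i$, the pseudoline $p_j$ and its twin $p_j'$ pass through opposite arcs of the lane $L_{i,j}$: if $i$ is even, $p_j$ uses the top arc of $L_{i,j}$ and $p_j'$ the bottom arc, while if $i$ is odd the roles are swapped. Hence, on the boundary between the sub-region bounded by $C_i, C_{i+1}$, the cyclic order in which $r(p_j)$ and $r(p_j')$ cross the chains $C_i$ and $C_{i+1}$ is ``interleaved'': along $C_i$ one of them is above the other, and along $C_{i+1}$ this vertical order is reversed. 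This interleaving is exactly the hypothesis that forces two arcs inside a disk (the sub-region) with endpoints alternating on the boundary to cross.

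First I would make precise which structural facts we may invoke. From \Cref{lem:polylines-contained-in-canvas} and the subsequent discussion, the cell $F_\text{region}$ is subdivided by $C_2, \dots, C_{2k+1}$ into $2k+1$ consecutive sub-regions, and by applying \Cref{lem:sameorder} to each sub-region we know the cyclic order of crossings of the curves $r(p_j), r(p_j')$ along its boundary is the same (up to rotation and reflection) as in the enhanced arrangement. Concretely this means: when we traverse the boundary of the sub-region bounded by $C_i$ and $C_{i+1}$, the four crossing points — $r(p_j) \cap C_i$, $r(p_j') \cap C_i$, $r(p_j') \cap C_{i+1}$, $r(p_j) \cap C_{i+1}$ — appear in the same cyclic order as in the construction, which is an alternating order $p_j, p_j', p_j, p_j'$ around the boundary. (This uses that $p_j$ and $p_j'$ swap which arc of the lane they use between consecutive chains, which is the whole point of the twinning; the intervening lanes $L_{i,j'}$ for $j' \neq j$ and the frame segments guarantee $p_j, p_j'$ qualify as intersectors in the sense needed.)

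Next I would run the topological crossing argument. Restrict attention to the closed sub-region $R$ bounded by (a piece of) $C_i$, (a piece of) $C_{i+1}$, and the top and bottom chains; $R$ is a topological disk. The portions of $r(p_j)$ and $r(p_j')$ inside $R$ are arcs with both endpoints on $\partial R$, and by the previous paragraph these four endpoints alternate around $\partial R$. By the Jordan curve theorem (applied inside the disk $R$), two arcs in a disk whose four endpoints appear in alternating order along the boundary circle must intersect. Therefore $r(p_j)$ and $r(p_j')$ cross inside $R$, as claimed. One small point to handle: the curves might touch $\partial R$ (i.e. the frame chains) at additional points, or a lane $L_{i,j}$ might be split by local spurious cells as warned in the text; but since the order of traced elements is fixed by \Cref{lem:structureboundarycurve} up to $\pm 1$ shifts, the alternation of the \emph{pair} $\{p_j, p_j'\}$ across the two bounding chains is preserved, and one can always pass to a sub-arc between the first entry and last exit to reduce to the clean alternating-endpoints situation.

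The main obstacle I anticipate is not the topological crossing lemma itself — that is classical — but rather pinning down rigorously that the four relevant crossing points genuinely alternate on $\partial R$ in \emph{every} realization, rather than merely in the intended drawing. This requires carefully combining \Cref{lem:sameorder} (which fixes the cyclic order of intersector crossings along a boundary curve) with the specific combinatorial design of the lanes: we must check that $p_j$, $p_j'$, together with the internal vertices of $C_i$ and $C_{i+1}$ that separate the lanes, satisfy the intersector hypotheses of \Cref{def:intersectors} with respect to an appropriate induced cycle, so that the lemma applies and yields the alternation. Once that bookkeeping is done, the crossing follows immediately, and in fact the same argument shows that $r(p_j)$ and $r(p_j')$ must cross in \emph{each} of the $2k$ sub-regions for $i = 2, \dots, 2k+1$, giving at least $2k$ crossings in total — which is the key fact that will later be played off against the bend budget to force straightness inside the canvas.
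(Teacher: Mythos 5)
Your proof is correct and takes essentially the same approach as the paper: both use \Cref{lem:sameorder} together with the twin-swapping lane construction to conclude that the crossings of $r(p_j)$ and $r(p_j')$ alternate (are non-nesting) along the boundary curve of the sub-region between $C_i$ and $C_{i+1}$, and then invoke the standard fact that two arcs in a disk with alternating endpoints on the boundary must cross. The extra bookkeeping you flag (verifying the intersector hypotheses and handling spurious local cells) is exactly what the paper's brief proof leaves implicit, so your write-up is simply a more detailed version of the same argument.
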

\begin{proof}
    By \Cref{lem:sameorder} and by construction of $G$, the geometric order of $r(p_j)$ and $r(p_j')$ along the interior boundary curve of the cell bounded by $C_i$ and $C_{i+1}$ must be alternating (non-nesting). Thus, the two \polylines $r(p_j)$ and $r(p_j')$ must cross within this cell. 
\end{proof}

\begin{lemma}
    \label{lem:consecutive-intersections}
    Let $c$ and $d$ be two \polylines that intersect in exactly $t$ points $i_1,\ldots,i_t$, with both polylines visiting the intersection points in this order.
    Then $c$ and $d$ have at least $t-1$ bends in total between the first and the last intersection.
\end{lemma}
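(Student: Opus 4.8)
The plan is to argue gap by gap between consecutive intersection points, leaning on the elementary fact that two straight segments sharing both of their endpoints must coincide. Write $i_1,\dots,i_t$ for the intersection points listed in the order in which both $c$ and $d$ visit them, and for each $j\in\{1,\dots,t-1\}$ let $\gamma_j$ denote the sub-arc of $c$ running from $i_j$ to $i_{j+1}$ and $\delta_j$ the sub-arc of $d$ running from $i_j$ to $i_{j+1}$. These sub-arcs are well defined precisely because the two polylines meet the intersection points in the same order, so in any parametrization of $c$ (resp.\ $d$) the parameter values at which $i_1,\dots,i_t$ occur are increasing.

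The key step is the claim that for every $j$ at least one of $\gamma_j,\delta_j$ fails to be a straight segment; equivalently, $c$ has a bend strictly inside $\gamma_j$ or $d$ has a bend strictly inside $\delta_j$. To see this, suppose both $\gamma_j$ and $\delta_j$ were straight. Since the $i_\ell$ are pairwise distinct, $i_j\neq i_{j+1}$, and a straight polyline sub-arc joining two distinct points is exactly the segment between them; hence $\gamma_j=\delta_j$ is the segment with endpoints $i_j,i_{j+1}$, so $c$ and $d$ share this whole segment and therefore intersect in infinitely many points, contradicting $|c\cap d|=t$.

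With the claim in hand I would simply count. For $j=1,\dots,t-1$ let $a_j$ be the number of bends of $c$ lying strictly inside $\gamma_j$ and $b_j$ the number of bends of $d$ lying strictly inside $\delta_j$; the claim yields $a_j+b_j\ge 1$, hence $\sum_{j=1}^{t-1}(a_j+b_j)\ge t-1$. Because the interiors of $\gamma_1,\dots,\gamma_{t-1}$ are pairwise disjoint and all contained in the portion of $c$ between $i_1$ and $i_t$, the sum $\sum_{j}a_j$ is at most the number of bends of $c$ between the first and last intersection, and symmetrically for $d$. Adding the two bounds gives that $c$ and $d$ have at least $t-1$ bends in total between $i_1$ and $i_t$, as claimed.

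I do not expect a genuine obstacle here; the only things to be a little careful about are making the phrase ``sub-arc from $i_j$ to $i_{j+1}$'' precise (which is exactly where the hypothesis that the two visiting orders agree is used), and noting that collinear consecutive edges — ``bends'' that are not genuine direction changes — can only help, since they need not be counted.
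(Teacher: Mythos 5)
Your proposal is correct and follows the same route as the paper: the paper's proof simply asserts (with a figure) that there must be at least one bend between any two consecutive intersection points and sums over the $t-1$ gaps, which is exactly the per-gap claim you prove via the observation that two straight sub-arcs with the same distinct endpoints would coincide and force infinitely many intersections. Your write-up just supplies the details the paper leaves to the reader.
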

\begin{proof}
    It is easy to see that there must be at least one bend between any two consecutive intersection points.
    See \Cref{fig:consecutive-intersections} for an illustration.
\end{proof}

\begin{figure}
    \centering
    \includegraphics{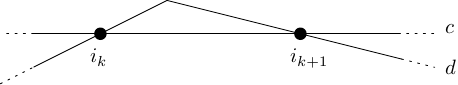}
    \caption{It is easy to see that the two \polylines must have at least one bend between the two intersection points.}
    \label{fig:consecutive-intersections}
\end{figure}

We now finally get our desired lemma:
\begin{lemma}
    \label{lem:straight}
    At least one of $r(p_j)$ and $r(p_j')$ is a straight line within the canvas.
\end{lemma}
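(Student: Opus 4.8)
The plan is a bend-counting argument by contradiction. Suppose, for contradiction, that \emph{both} $r(p_j)$ and $r(p_j')$ have at least one bend inside the canvas. Recall from the discussion following \Cref{lem:polylines-contained-in-canvas} that both curves lie in $F_\text{region}$, and that by \Cref{lem:structureboundarycurve} applied to $\trace(F_\text{region})$ together with \Cref{lem:sameorder} applied to the sub-regions, the chains $C_2,\dots,C_{2k+1}$ subdivide $F_\text{region}$ into $2k+1$ sub-regions $\mathrm{SR}_1,\dots,\mathrm{SR}_{2k+1}$ (with $\mathrm{SR}_1$ the canvas) appearing in this left-to-right order, and each of $r(p_j),r(p_j')$ traverses all of them in this order, running from its connector at $C_1$ to its connector at $C_{2k+2}$. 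By \Cref{lem:crossingpercell}, in each of the $2k$ weaving sub-regions $\mathrm{SR}_2,\dots,\mathrm{SR}_{2k+1}$ the curves $r(p_j)$ and $r(p_j')$ cross; I would fix one such crossing point $x_m\in \mathrm{SR}_m$ for each $m\in\{2,\dots,2k+1\}$.

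First I would observe that, since both curves traverse the sub-regions in the same order, both of them meet $x_2,x_3,\dots,x_{2k+1}$ in this order, and all of these crossings occur strictly after the canvas portion of each curve. Then I would invoke the argument behind \Cref{lem:consecutive-intersections}: between two consecutive common intersection points of two polylines there is at least one bend on one of them; applied to each consecutive pair $x_m,x_{m+1}$ this yields at least one bend of $r(p_j)\cup r(p_j')$ lying strictly between $x_m$ and $x_{m+1}$. The $2k-1$ sub-arcs of $r(p_j)\cup r(p_j')$ delimited by consecutive crossings are internally disjoint and contained in the weaving region, hence disjoint from the canvas, so these $2k-1$ bends are pairwise distinct and none of them lies in the canvas. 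Adding the (at least) two bends assumed to lie in the canvas, we obtain at least $2k+1$ bends distributed over the two curves. But each is a \polyline with $k$ bends, so together they have at most $2k$ bends --- a contradiction. Hence at least one of $r(p_j)$ and $r(p_j')$ has no bend inside the canvas, i.e., is a straight line within the canvas.

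The step I expect to require the most care is the claim that each of $r(p_j)$ and $r(p_j')$ crosses the chains $C_2,\dots,C_{2k+1}$ monotonically --- that it passes from $\mathrm{SR}_{i-1}$ to $\mathrm{SR}_i$ exactly once --- so that the chosen crossings $x_2,\dots,x_{2k+1}$ genuinely appear in this order along both curves, which is precisely the hypothesis needed to apply \Cref{lem:consecutive-intersections}. This should follow from the construction: $r(p_j)$ is adjacent in $G$ only to a single (lane) arc of each $C_i$ and to its two connectors at $C_1$ and $C_{2k+2}$, and the cyclic order of crossings along the relevant boundary curves is pinned down by \Cref{lem:structureboundarycurve} and \Cref{lem:sameorder}, so the transition through each $C_i$ is forced to happen and in the correct order. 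A minor remaining point is the degenerate case in which a crossing $x_m$ coincides with a bend, which is harmless and can be sidestepped by an infinitesimal perturbation of the $x_m$ or by counting bends with multiplicity.
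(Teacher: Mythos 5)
Your proposal is correct and follows essentially the same route as the paper: pick one crossing per weaving sub-region via \Cref{lem:crossingpercell}, use \Cref{lem:consecutive-intersections} to force at least $2k-1$ bends outside the canvas, and compare with the total budget of $2k$ bends for the twin pair (the paper states this directly rather than by contradiction). Your extra attention to why the crossings occur in the same order along both curves is a reasonable elaboration of a step the paper asserts without further comment.
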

\begin{proof} 
    The two twin polylines 
    must have at least $2k$ intersection points that occur in the same order along the two twins: to find these points, we pick one per cell enclosed between $C_i,C_{i+1}$ for $2\leq i\leq 2k+1$ as guaranteed by \Cref{lem:crossingpercell}.
    Thus by \Cref{lem:consecutive-intersections}, 
    the two \polylines have at least $2k-1$ bends in total outside of the canvas.
    This implies that at least one of them is straight inside the canvas.
\end{proof}

We conclude that \A must be stretchable, finishing the proof of \Cref{thm:polylines}.

\newpage

\bibliographystyle{plainurl}
\bibliography{ETR,ref}

\end{document}